\newtheorem{theorem}{Theorem}
\newtheorem{lemma}{Lemma}
\newtheorem{definition}{Definition}
\begin{document}

\title{Probabilistic pure state conversion on the majorization lattice}

\author{Serge Deside}\affiliation{Centre for Quantum Information and Communication, \'Ecole polytechnique de Bruxelles, CP 165, Universit\'e libre de Bruxelles, 1050 Brussels, Belgium}
\author{Matthieu Arnhem}
\affiliation{Centre for Quantum Information and Communication, \'Ecole polytechnique de Bruxelles, CP 165, Universit\'e libre de Bruxelles, 1050 Brussels, Belgium}
\affiliation{Department of Optics, Palacký University, 17. listopadu 12, 771 46 Olomouc, Czech Republic}
\affiliation{Univ. Lille, CNRS, Inria, UMR 8524 - Laboratoire Paul Painlevé, F-59000 Lille, France}
\author{Célia Griffet}
\affiliation{Centre for Quantum Information and Communication, \'Ecole polytechnique de Bruxelles, CP 165, Universit\'e libre de Bruxelles, 1050 Brussels, Belgium}
\author{Nicolas J. Cerf}
\affiliation{Centre for Quantum Information and Communication, \'Ecole polytechnique de Bruxelles, CP 165, Universit\'e libre de Bruxelles, 1050 Brussels, Belgium}

\begin{abstract}
Entanglement is among the most fundamental—and at the same time puzzling—properties of quantum physics. Its modern description relies on a resource-theoretical approach, which treats entangled systems as a means to enable or accelerate certain informational tasks. Hence, it is of crucial importance to determine whether—and how—different entangled states can be converted into each other under free operations (those which do not create entanglement from nothing). Here, we show that the majorization lattice provides an efficient framework in order to characterize the allowed transformations of pure entangled states under local operations and classical communication. The underlying notions of meet $\land$ and join $\lor$ in the majorization lattice lead us to define, respectively, the optimal common resource and optimal common product states. Based on these two states, we introduce two optimal probabilistic protocols for the (single-copy) conversion of incomparable bipartite pure states, which we name greedy and thrifty. Both protocols reduce to Vidal's protocol [\href{https://doi.org/10.1103/PhysRevLett.83.1046}{G. Vidal, Phys. Rev. Lett. 83, 1046 (1999)}] if the initial and final states are comparable, but otherwise the thrifty protocol can be shown to be superior to the greedy protocol as it yields a more entangled residual state when it fails (they both yield the same entangled state with the same optimal probability when they succeed). Finally, we consider
the generalization of these protocols to entanglement transformations involving multiple initial or final states.
\end{abstract}


\maketitle

\section{Introduction}

Quantum entanglement has long been recognized as a necessary resource in many quantum information protocols \cite{Horodeckis_QuantumEntanglement_2009,BennettWiesner_CommunicationViaOneAndTwoParticleOperators_1992,BennettEtAl_TeleportingAnUnknownQuantumState_1993} and is often regarded as the paramount quantum resource \cite{ChitambarGour_QuantumResourceTheories_2019}. In this context, separable states are defined as resource-free states and local operations supplemented with classical communication (LOCC) are viewed as free operations, that is, operations that map separable states onto separable states (hence, do not create any resource). An important aspect of such a resource theory of entanglement concerns the study of allowed entanglement transformations, \textit{i.e.}, the conversions between entangled states of a system shared between several parties that can be achieved using LOCC only. Note that the conversions of bipartite pure states are asymptotically (in the limit of many copies) always allowed with some yield \cite{BennettBernstein_Concentrating_1996}. Instead, here, we focus on the case of single-copy transformations between bipartite pure states. 
The most notable result on this subject, due to Nielsen \cite{Nielsen_Conditions_1999}, consists in a simple characterization of deterministic bipartite pure state conversions using LOCC by means of majorization theory. A related result, due to Vidal \cite{Vidal_EntanglementOfPureStatesForASingleCopy_1999}, further extends this characterization to probabilistic transformations. 


The theory of majorization allows one to compare probability distributions in terms of intrinsic disorder \cite{MarshallOlkinArnold_Inequalities_2011}. Given two probability vectors $\textbf{\textit{p}}$ and $\textbf{\textit{q}}$, we say that $\textbf{\textit{p}}$ is majorized by $\textbf{\textit{q}}$, written $\textbf{\textit{p}} \prec \textbf{\textit{q}}$, if and only if
\begin{align}
    \label{eq:majo}
    &\sum_{i=1}^k p_i^\downarrow \leq \sum_{i=1}^k q_i^\downarrow, \quad \forall k \in [1, d-1],\\
    \label{equality majorization}
    &\sum_{i=1}^k p_i^\downarrow = \sum_{i=1}^k q_i^\downarrow, \quad \text{for } k = d,
\end{align}
where $p_i^\downarrow$ (resp. $q_i^\downarrow$) denotes the $i^\text{th}$ greatest component of $\textbf{\textit{p}}$ (resp. $\textbf{\textit{q}}$). Here, $d$ is the number of non-zero components of the longest vector between $\textbf{\textit{p}}$ and $\textbf{\textit{q}}$ (zeros are appended at the end of the smallest vector if necessary), hence Eq.~(\ref{equality majorization}) is obviously fulfilled due to  probability normalization. Note that majorization only defines a preorder relation on probability vectors, meaning that two probability vectors may be incomparable under majorization. For example, $(0.5, 0.4, 0.1)$ and $(0.6, 0.2, 0.2)$ do not fulfill Eq.~(\ref{eq:majo}) in either direction. On the opposite, if Eq.~(\ref{eq:majo}) holds in both directions ($\textbf{\textit{p}} \prec \textbf{\textit{q}}$ and $\textbf{\textit{q}} \prec \textbf{\textit{p}}$), then the two vectors are said to be equivalent, which means that they coincide up to a permutation.

Majorization relations have a central role in entanglement theory, as implied by Nielsen's theorem \cite{Nielsen_Conditions_1999}. Consider two parties, Alice and Bob, sharing a bipartite pure state $\ket{\psi}$. The Schmidt decomposition of $\ket{\psi}$ is written
\begin{equation}
    \ket{\psi}_{AB} = \sum_{i=1}^d \sqrt{\lambda_\psi^{(i)}} \ket{i}_A \ket{i}_B,
\end{equation} 
where $\lambda_\psi = \left(\lambda_\psi^{(1)}, \cdots, \lambda_\psi^{(d)}\right) $ is the Schmidt vector of $\ket{\psi}$ made with the eigenvalues of the reduced density matrix $\hat{\rho}_A$ (or $\hat{\rho}_B$), while $\{\ket{i}_A\}$ (resp. $\{\ket{i}_B\}$) denotes the eigenbasis of $\hat{\rho}_A$ (resp. $\hat{\rho}_B$). If Alice and Bob wish to convert $\ket{\psi}_{AB}$ into another pure state 
\begin{equation}
\ket{\phi}_{AB} = \sum_{i=1}^d \sqrt{\lambda_\phi^{(i)}} \ket{i}_A \ket{i}_B,
\end{equation}
this entanglement transformation is possible with certainty using LOCC if and only if 
\begin{equation}
\label{eq:nielsen majorization}
    \lambda_\psi \prec \lambda_\phi.
\end{equation}
Of course, two states are locally unitarily equivalent, \textit{i.e.}, interconvertible under local unitaries $\hat U_A\otimes \hat U_B$, if and only if they share the same Schmidt vector up to a permutation. Hence, it is sufficient to consider Schmidt vectors sorted by decreasing order in order to compare them with a majorization relation (in what follows, all Schmidt vectors will always be assumed to be sorted decreasingly, that is $\lambda_\psi^{(i)\downarrow}=\lambda_\psi^{(i)}$, so we will omit the arrow sign).

An equivalent form of Nielsen's theorem can be stated by using a sufficiently large set of entanglement monotones \cite{Vidal_EntanglementMonotones_2000}, \textit{i.e.}, functionals of the state $\ket{\psi}$ that cannot increase, on average, under a LOCC transformation. By using Eqs. \eqref{eq:majo} and \eqref{equality majorization}, one can easily check that
\begin{equation}
\label{eq:definition of entanglement monotones}
    E_l(\psi) = \sum_{i = l}^d \lambda_\psi^{(i)}, \quad \forall l \in [1,d],
\end{equation}
form a set of $d$ entanglement monotones which allow us to reexpress Nielsen's theorem as
\begin{equation}
\ket\psi \overset{\text{LOCC}}\longrightarrow \ket\phi ~~ \Longleftrightarrow ~~ E_l(\psi) \geq E_l(\phi),\quad \forall l \in [1,d],
\end{equation}
where $\ket\psi \overset{\text{LOCC}}\longrightarrow \ket\phi$ means that $\ket\psi$ can be deterministically converted into $\ket\phi$ by using LOCC. Protocols achieving such a transformation have been discussed, for example, in Refs. \cite{Jensen_SimpleAlgorithmForLocalConversion_2001, TorunYildiz_DeterministicTransformations_2015}.
The characterization of entanglement transformations via majorization relations has been generalized to probabilistic LOCC transformations by Vidal in Ref. \cite{Vidal_EntanglementOfPureStatesForASingleCopy_1999}, and an optimal protocol achieving the desired probabilistic transformation was also provided (it will be detailed in Sec. \ref{sec:probabilistic state conversions}). Optimality refers here to a protocol that reaches the exact state $\ket\phi$ with the highest possible success probability.

In this paper, we exploit the  \textit{majorization lattice} \cite{CicaleseVaccaro_Supermodularity_2002}, a notion recently shown to be relevant for addressing quantum information questions \cite{Korzekwa_StructureOfTheThermodynamicArrow_2017, BosykEtAl_ApproximateTransformationsOfBipartite_2017, BosykFreytesBellomo_LatticeOfTrumpingMajorization_2018, BosykBellomoHolikFreytes_OptimalCommonResource_2019, MassriEtAl_ExtremalElementsOfASublattice_2020, SauerweinSchwaigerKraus_DiscreteAndDifferentiableEntanglementTransformations_2018, deOliveiraJuniorEtAl_GeometricStructure_2022, YuGühne_DetectingCoherence_2019}, in order to revisit the probabilistic entanglement transformations between bipartite pure states. Our 
analysis builds upon two central elements of a lattice theory, the so-called \textit{meet} and \textit{join} (see Fig. \ref{fig:majorization lattice}), and yields two corresponding optimal protocols (see Fig. \ref{fig:majorization lattice for bipartite pure states}). Considering an entanglement lattice, where each node is a bipartite pure state and the lattice structure emerges from majorization relations, we associate the meet and join to two specific states that we call, respectively, the optimal common resource (OCR)\footnote{\label{note1}This term was first coined in Ref. \cite{GuoChitambarDuan_CommonResourceState_2016} but with no reference to the majorization lattice.} and optimal common product (OCP) states. We first build a protocol using the OCP state and show that it is akin to Vidal's optimal protocol. More interestingly, we then build a second protocol making use of the OCR state and prove that, while being again optimal, it better preserves average entanglement. Both protocols trivially reduce to Vidal's in the special case where $\ket\psi$ and $\ket\phi$ are comparable states in the sense of majorization theory, so we focus on the interesting case of incomparable states.

The protocol based on  the OCP state can be viewed as a \textit{greedy} protocol. In the context of optimization, a greedy algorithm evolves towards the solution by choosing the local optimal move at each stage. In analogy, our greedy protocol favors the immediate gain of a deterministic move and postpones the probabilistic move (see red arrows in Fig.~\ref{fig:majorization lattice for bipartite pure states}). Our second protocol based on the OCR state can be understood, in contrast, as a \textit{thrifty} protocol. A thrifty optimization algorithm prefers not to choose immediate gain at each stage. Here, our thrifty protocol indeed tolerates starting with a probabilistic move (see green arrows in Fig.  \ref{fig:majorization lattice for bipartite pure states}). The main results of this paper consist in Theorems \ref{thm:optimal proba} and \ref{thm:better entanglement resource}, which prove, respectively, the optimality and better average entanglement preservation of the thrifty protocol.



The paper is organized as follows. In Sec.~\ref{sec:majorization lattice}, we first introduce the majorization lattice and, specifically for bipartite entanglement, the OCR and OCP states. Then, in Sec.~\ref{sec:probabilistic state conversions}, we present Vidal's theorem for the probabilistic conversion of bipartite pure states  as well as the corresponding optimal protocol. This provides us with the tools for introducing the greedy and thrifty protocols in Sec. \ref{sec:new protocol}. We further establish in Sec. \ref{sec:generalization to multiple final states} the generalization of the greedy and thrifty protocols to an arbitrary number of initial or final states. Finally, we give our conclusions in Sec. \ref{sec:conclusion}.


\section{Majorization lattice}
\label{sec:majorization lattice}


Arising from order theory, the notion of lattice unveils a partial order relation on the elements of a set. Hereafter, we consider a specific lattice where the elements under comparison belong to the set of probability distributions sorted decreasingly and where the partial order is given by the majorization relation. More formally, the majorization lattice \cite{CicaleseVaccaro_Supermodularity_2002} is a quadruple $\langle \mathcal{P}_d, \prec, \land, \lor \rangle$, where
\begin{itemize}
    \item $\mathcal{P}_d$ is the set of discrete probability vectors sorted decreasingly with, at most, $d$ non-zero coefficients, that is, $\mathcal{P}_d = \{ (p_1, \cdots, p_d) \text{, s.t. } p_1\geq \cdots \geq p_d \geq 0 \text{ and } \sum_{i=1}^d p_i =  1\}$;
    \item $\prec$ is the majorization relation;  
    \item $\land$ denotes the so-called \textit{meet};
    \item $\lor$ denotes the so-called \textit{join}.
\end{itemize}

The meet of two elements $\textbf{\textit{p}}, \textbf{\textit{q}} \in \mathcal{P}_d$, denoted $\textbf{\textit{p}} \land \textbf{\textit{q}}$, is defined as the sole element of $\mathcal{P}_d$ such that, $\forall \textbf{\textit{r}} \in \mathcal{P}_d$ with $\textbf{\textit{r}} \prec \textbf{\textit{p}}$ and $\textbf{\textit{r}} \prec \textbf{\textit{q}}$, we have $\textbf{\textit{r}} \prec \textbf{\textit{p}} \land \textbf{\textit{q}}$. Analogously, the join of two elements $\textbf{\textit{p}}, \textbf{\textit{q}} \in \mathcal{P}_d$, denoted $\textbf{\textit{p}} \lor \textbf{\textit{q}}$, is defined as the sole element of $\mathcal{P}_d$ such that, $\forall \textbf{\textit{r}} \in \mathcal{P}_d$ with $\textbf{\textit{p}} \prec \textbf{\textit{r}}$ and $\textbf{\textit{q}} \prec \textbf{\textit{r}}$, we have $\textbf{\textit{p}} \lor \textbf{\textit{q}} \prec \textbf{\textit{r}}$. 

The following definition allows one to characterize mathematically the meet of two vectors in $\mathcal{P}_d$.

\begin{definition}\textnormal{\cite{CicaleseVaccaro_Supermodularity_2002}}
\label{def:meet}
    Let \textbf{p} and \textbf{q} $\in \mathcal{P}_d$, the meet of \textbf{p} and \textbf{q}, denoted $\textbf{p} \land \textbf{q} = (m_1, m_2, \cdots, m_d)$, can be expressed as
    \begin{equation}
        m_i = \min \left\{\sum_{j=1}^i p_j, \sum_{j=1}^i q_j \right\} - \min \left\{\sum_{j=1}^{i-1} p_j, \sum_{j=1}^{i-1} q_j \right\},
    \end{equation}
     for all $i \in [1,d]$, with the convention that $\sum_{j=1}^k p_j = \sum_{j=1}^k q_j = 0$ when $k=0$.
\end{definition}

The determination of the join of two vectors in $\mathcal{P}_d$ is more cumbersome. As we will not use it explicitly in the following, we refer the interested reader to Ref. \cite{CicaleseVaccaro_Supermodularity_2002} for an algorithm producing the join.


Figure \ref{fig:majorization lattice} provides a useful visual insight into the majorization lattice where, for each element $\textbf{\textit{x}}$ of the lattice, an upper and a lower cone are defined. The elements $\textbf{\textit{x}}_\text{up}$ of the upper cone correspond to all elements that are majorized by $\textbf{\textit{x}}$, \textit{i.e.}, $\textbf{\textit{x}}_\text{up} \prec \textbf{\textit{x}}$, while the elements $\textbf{\textit{x}}_\text{down}$ of the lower cone correspond to all elements majorizing $\textbf{\textit{x}}$, \textit{i.e.}, $\textbf{\textit{x}} \prec \textbf{\textit{x}}_\text{down}$. In other words, the upper cone comprises all the elements that are more disordered than $\textbf{\textit{x}}$, whereas the lower cone comprises all the elements that are more ordered than $\textbf{\textit{x}}$. Considering now two elements \textbf{\textit{p}} and \textbf{\textit{q}} in the lattice, the meet $\textbf{\textit{p}} \land \textbf{\textit{q}}$ can be viewed as the most ordered element that remains more disordered than both \textbf{\textit{p}} and \textbf{\textit{q}}. Conversely, the join $\textbf{\textit{p}} \lor \textbf{\textit{q}}$ corresponds to the most disordered element that remains more ordered than both \textbf{\textit{p}} and \textbf{\textit{q}}. It is worth mentioning that the meet and join are useful notions only when the two probability vectors \textbf{\textit{p}} and \textbf{\textit{q}} are incomparable under majorization (otherwise, if 
$\textbf{\textit{p}} \prec \textbf{\textit{q}}$, then $\textbf{\textit{p}} \land \textbf{\textit{q}} = \textbf{\textit{p}} $ and $\textbf{\textit{p}} \lor \textbf{\textit{q}} = \textbf{\textit{q}}$).
We will thus be mostly concerned with incomparable Schmidt vectors when developing the greedy and thrifty protocols in Sec.~\ref{sec:new protocol}.

\begin{figure}
    \centering    
    \includegraphics[scale=0.9]{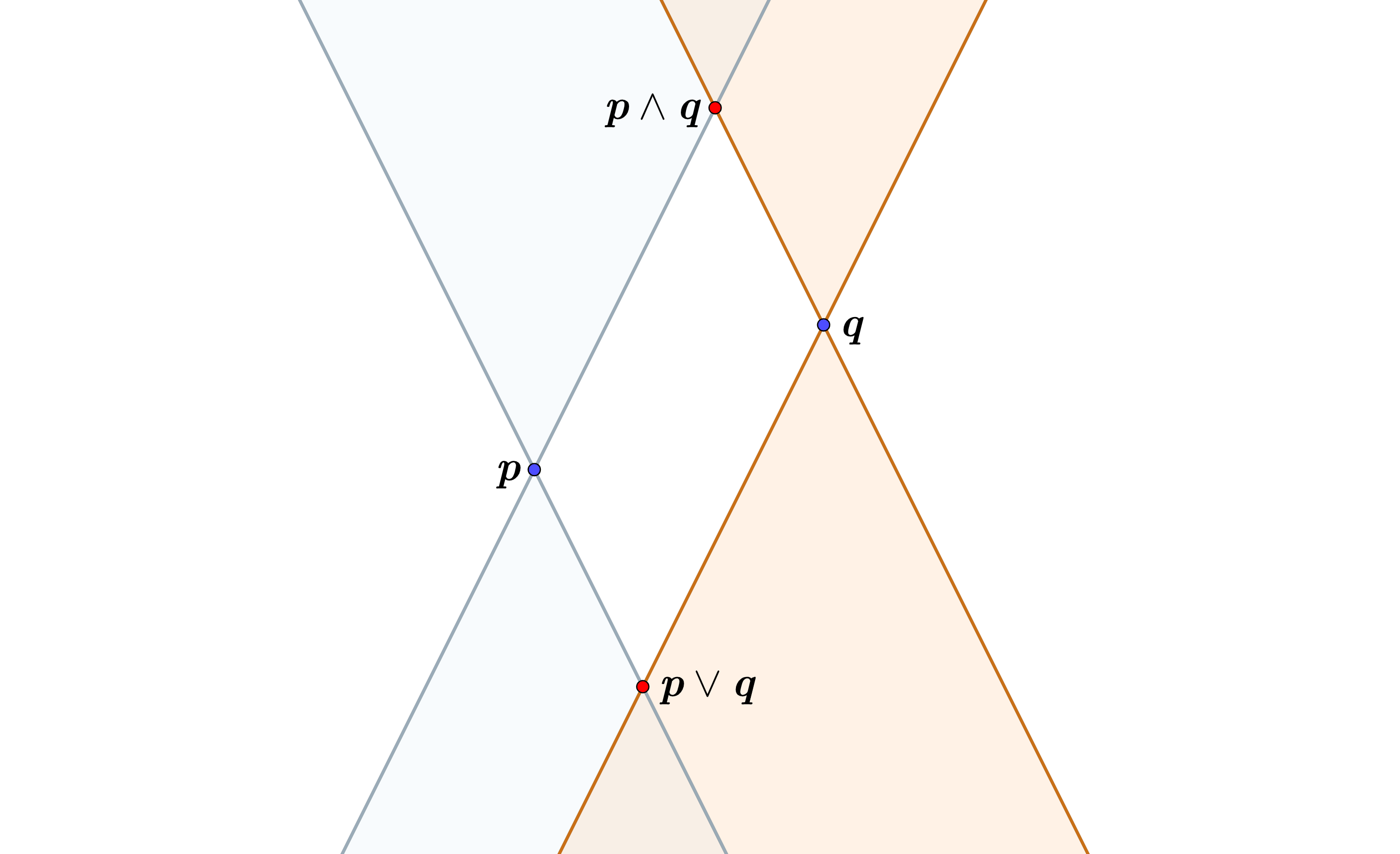}
    \caption{Schematic representation of the majorization lattice. Let $\textbf{\textit{p}}$ and $\textbf{\textit{q}} \in \mathcal{P}_d$, the meet $\textbf{\textit{p}} \land \textbf{\textit{q}}$ is located at the intersection of the two upper cones, whereas the join $\textbf{\textit{p}} \lor \textbf{\textit{q}}$ is located at the intersection of the two lower cones. Since the upper cones contain more disordered elements, the meet $\textbf{\textit{p}} \land \textbf{\textit{q}}$  stands for the most ordered element among those that are more disordered than both $\textbf{\textit{p}}$ and $\textbf{\textit{q}}$. Conversely, since the lower cones contain more ordered elements, the join $\textbf{\textit{p}} \lor \textbf{\textit{q}}$ stands for the most disordered element among those that are more ordered than both $\textbf{\textit{p}}$ and $\textbf{\textit{q}}$.}
    \label{fig:majorization lattice}
\end{figure}


In the following, we will consider an entanglement lattice whose elements are bipartite pure states (Fig. \ref{fig:majorization lattice for bipartite pure states}). Since the entanglement of a bipartite pure state is univocally characterized by its Schmidt vector, we may equivalently view the elements of the lattice as probability vectors in $\mathcal{P}_d$. Hence, according to Nielsen's theorem, a state can be obtained from any state belonging to its upper cone by using only deterministic LOCC. Conversely, a state can be converted into any state of its lower cone by using only deterministic LOCC. In other words, the lattice structure translates the allowed entanglement transformations. Accordingly, the meet of two (or more) states denotes the least entangled state that can still be deterministically converted into any one of them. We refer to it as the optimal common resource (OCR)\footref{note1} state. Reciprocally, the join of two (or more) states denotes the most entangled state that can still be deterministically produced from any one of them. We refer to it as the optimal common product (OCP) state. 
The OCP and OCR states will be used in Sec.~\ref{sec:new protocol} in order to construct the greedy and thrifty protocols, respectively, as pictured in Fig. \ref{fig:majorization lattice for bipartite pure states}.

Instructively, the optimality of the OCP and OCR states can be rephrased as follows. Considering two bipartite pure states $\ket{\psi}$ and $\ket{\phi}$, any state $\ket{\tau}$ that is majorized by both $\ket{\psi}$ and $\ket{\phi}$ (\textit{i.e.}, any state $\ket{\tau}$ lying inside the intersection of the two upper cones emerging from $\ket{\psi}$ and $\ket{\phi}$ in Fig. \ref{fig:majorization lattice for bipartite pure states}) must also be majorized by the OCR state $\ket{\psi \land \phi}$. Thus, according to Nielsen's theorem, all ``resource states" (\textit{i.e.}, states $\ket{\tau}$ that are convertible into either $\ket{\psi}$ or $\ket{\phi}$ via a deterministic LOCC) can also produce the OCR state $\ket{\psi \land \phi}$. Since the latter is itself convertible into either $\ket{\psi}$ or $\ket{\phi}$, it is the resource state that requires the lowest amount of resource. Conversely,  any state $\ket{\sigma}$
that majorizes both $\ket{\psi}$ and $\ket{\phi}$ (\textit{i.e.}, any state $\ket{\sigma}$ lying inside the intersection of the two lower cones emerging from $\ket{\psi}$ and $\ket{\phi}$ in Fig. \ref{fig:majorization lattice for bipartite pure states}) must necessarily also majorize the OCP state $\ket{\psi \lor \phi}$. This means that all ``producible states" (\textit{i.e.}, states $\ket{\sigma}$ that can be deterministically produced with LOCC either from $\ket{\psi}$ or from $\ket{\phi}$) can also be produced from the OCP state $\ket{\psi \lor \phi}$. Since the latter is itself obtainable either from $\ket{\psi}$ or from $\ket{\phi}$, it is the producible state that contains the highest amount of resource.


\begin{figure}[t!]
    \centering    
    \includegraphics[scale=0.9]{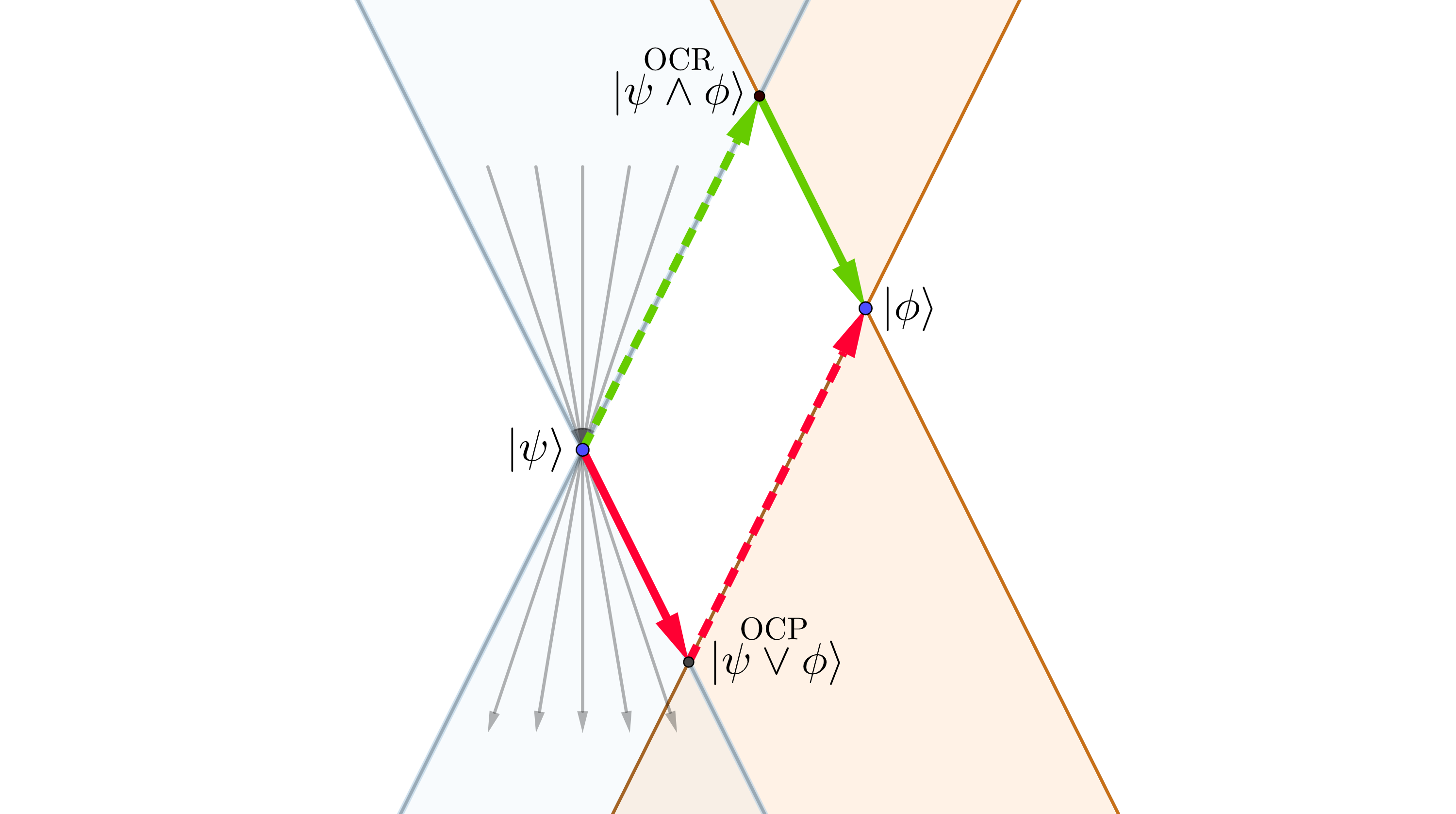}
    \caption{Schematic representation of the majorization lattice applied to bipartite entanglement. Let $\ket\psi$ and $\ket\phi$ be two bipartite pure states that are assumed to be incomparable. Any state lying in the upper cone of $\ket\psi$ is convertible into $\ket\psi$ via deterministic LOCC, while any state in the lower cone of $\ket\psi$ is reachable from $\ket\psi$ via deterministic LOCC (this is pictured with a collection of grey arrows pointing downwards). The same holds true, of course, for the cones associated with $\ket\phi$. The OCR state $\ket{\psi \land \phi}$ is located at the intersection of the two upper cones and can be understood as the least entangled state capable of deterministically producing either $\ket\psi$ or $\ket\phi$. In contrast, the OCP state $\ket{\psi \lor \phi}$ is located at the intersection of the two lower cones and can be viewed as the most entangled state that both $\ket\psi$ and $\ket\phi$ are capable to produce deterministically. We highlight the greedy (red) and thrifty (green) protocols as discussed in Sec.~\ref{sec:new protocol} for converting $\ket\psi$ into $\ket\phi$ (bold arrows correspond to deterministic LOCC transformations and dashed arrows to probabilistic LOCC transformations). The greedy protocol (in red) starts with a deterministic LOCC towards the OCP state and postpones the probabilistic LOCC. In contrast, the thrifty protocol (in green) starts with a probabilistic LOCC towards the OCR state before making the deterministic LOCC.}
    \label{fig:majorization lattice for bipartite pure states}
\end{figure}

\section{Probabilistic state conversion protocol}
\label{sec:probabilistic state conversions}

When a deterministic state conversion is impossible according to Nielsen's theorem, the same transformation may sometimes be achieved probabilistically. This is the content of Vidal's theorem \cite{Vidal_EntanglementOfPureStatesForASingleCopy_1999}, which we now present.

Let $\ket{\psi}$ be the initial state of a bipartite system shared by Alice and Bob, and let $\ket{\phi}$ be the target state they wish to obtain using only LOCC. We suppose that it is possible to perform the desired transformation from $\ket{\psi}$ to $\ket{\phi}$ with non-vanishing probability, \textit{i.e.}, the number of non-zero coefficients of $\lambda_{\psi}$ must be at least as large as that of $\lambda_{\phi}$ (in the following, we consider, without loss of generality, that they are equal). Vidal's theorem states that there exists a probabilistic protocol converting $\ket\psi$ into $\ket\phi$ with probability $p$ if and only if $E_l(\psi) \geq p \, E_l(\phi), \forall l \in [1,d]$. Thus, the optimal (maximum) probability with which Alice and Bob can perform the desired transformation is given by 
\begin{equation}
\label{eq:optimal proba conversion}
    p_{\max} = \underset{l \in [1, d]}{\min}\frac{E_{l}(\psi)}{E_{l}(\phi)}.
\end{equation}
Note that $p_{\max}\le 1$ because when $l=1$, we have $E_{1}(\psi)=E_{1}(\phi)=1$. 
Vidal described a protocol achieving this optimal probability in Ref.  \cite{Vidal_EntanglementOfPureStatesForASingleCopy_1999}, which we recall hereafter.

First, Alice and Bob apply  a deterministic LOCC in order to transform $\ket{\psi}$ into an intermediate state $\ket{\chi}$ which maximizes the fidelity with respect to the target state $\ket\phi$ while being deterministically reachable from $\ket{\psi}$ using only LOCC \cite{VidalJonathanNielsen_ApproximateTransformations_2000}, namely,
\begin{equation}
\label{eq:max fidelity state}
    \ket{\chi} = \underset{\ket{\alpha} ~:~ \ket{\psi} \overset{\text{LOCC}}\longrightarrow \ket{\alpha}}{\text{argmax}} |\langle\alpha|\phi\rangle|^2.
\end{equation}

Second, Alice performs a two-outcome measurement on her share of the system, leading, with some probability $p_{\max}$, to the target state $\ket{\phi}$ (in which case we say that the protocol has succeeded) or, with probability $1-p_{\max}$, to a state denoted as $\ket{\xi}$ (in which case we say that the protocol has failed). We show below that this residual state $\ket{\xi}$ always possesses a strictly smaller number of non-vanishing Schmidt coefficients than $\ket{\phi}$, hence the conversion from $\ket{\xi}$ to $\ket{\phi}$ is fully impossible, even probabilistically.
Note that Vidal's conversion protocol, which we refer simply to as a probabilistic LOCC transformation in the following, involves both a deterministic step and a probabilistic step.

The construction of the intermediate state $\ket{\chi}$ works as follows. We make use of the $d$ entanglement monotones 
defined in Eq. (\ref{eq:definition of entanglement monotones}) in order to define a sequence of ratios $r_j$'s. First, we define $r_1$, which corresponds to the maximum conversion probability from $\ket{\psi}$ to $\ket{\phi}$, as
\begin{equation}
\label{eq:r1}
    r_1 = \underset{l \in [1, d]}{\min}\frac{E_{l}(\psi)}{E_{l}(\phi)} \equiv \frac{E_{l_1}(\psi)}{E_{l_1}(\phi)} ,
\end{equation}
where $l_1$ is the value of $l$ that reaches the minimum (it is
chosen as the smallest value of $l$ in case of several minima).
Then, we define the next ratios $r_j$'s with $j=2,3,\cdots$ as follows (each $r_j$ is associated with a corresponding $l_j$) 
\begin{equation}
\label{eq:ri}
    r_j = \underset{l \in [1, l_{j-1}-1]}{\min} \frac{E_l(\psi)-E_{l_{j-1}}(\psi)}{E_l(\phi)-E_{l_{j-1}}(\phi)} \equiv \frac{E_{l_j}(\psi)-E_{l_{j-1}}(\psi)}{E_{l_j}(\phi)-E_{l_{j-1}}(\phi)} ,
\end{equation}
until we find some value of $j$, which we call $k$, satisfying $l_k = 1$. Finally, we define $l_0 = d+1$. On a side note, it can easily be shown that the sequences of $r_j$'s and $l_j$'s satisfy $0 < r_1 < \cdots < r_k$ and $l_0 > l_1 > \cdots > l_k=1$  \cite{Vidal_EntanglementOfPureStatesForASingleCopy_1999}.

From the ratios $r_j$'s, we define the Schmidt coefficients of the intermediate state
\begin{equation}
\label{eq:approximate state Vidal}
    \lambda_\chi^{(i)} = r_j\lambda_{\phi}^{(i)}, \quad \text{if~}  i \in [l_j, l_{j-1}-1],\quad j\in[1,k],
\end{equation}
which, by construction, satisfies $\lambda_\psi \prec \lambda_\chi$. Hence, $\ket\psi$ is convertible to $\ket\chi$ by using a deterministic LOCC.

The generalized measurement performed by Alice is described by the two Kraus operators 
\begin{gather}
    \hat M = \begin{pmatrix}
        \hat M_k & & \\
         & \ddots & \\
         & & \hat M_1
    \end{pmatrix},\\
    \label{N}
    \hat N = \begin{pmatrix}
         \sqrt{\hat I_{[l_{k-1}-l_k]}- \hat{M}_k^2} & & \\
         & \ddots & \\
         & & \sqrt{\hat I_{[l_0-l_1]}-\hat{M}_1^2}
    \end{pmatrix},
\end{gather}
with 
\begin{equation}
    \hat M_j = \sqrt{\frac{r_1}{r_j}} \,\hat I_{[l_{j-1}-l_j]}, \quad \text{for~}  j \in [1, k],
\end{equation}
where $\hat I_{[l_{j-1}-l_j]}$ is the identity operator in a $(l_{j-1}-l_j)$-dimensional Hilbert space. We have, of course, the completeness relation $\hat M^\dagger\hat M+\hat N^\dagger\hat N= \hat I_{[d]}$.
If Alice obtains the measurement outcome linked to $\hat M$, she obtains the target state $\ket{\phi}$. This happens with probability $r_1$ since we have
\begin{equation}
    \hat M \otimes \hat I \ket{\chi} = \sqrt{r_1} \ket{\phi}.
\end{equation}
Otherwise, if she obtains the outcome linked to $\hat N$, she gets the residual state $\ket{\xi}$ such that  
\begin{equation}
    \hat N \otimes \hat I \ket{\chi} = \sqrt{1-r_1} \ket{\xi}.
\end{equation}

It is clear from Eq. \eqref{N} that $\hat N$ possesses at least one vanishing diagonal element, hence $\ket{\xi}$ has strictly less non-vanishing Schmidt coefficients than $\ket{\phi}$. Therefore, it is impossible for Alice and Bob to obtain state $\ket{\phi}$ from $\ket{\xi}$ by using (even probabilistic) LOCC transformations.

Note that in the special case where $\lambda_{\psi} \prec \lambda_{\phi}$, Vidal's protocol reduces to a deterministic  ($p_{\max}=1$)
 protocol for converting $\ket{\psi}$ into $\ket{\phi}$, which exists as a consequence of Nielsen's theorem.

\section{Greedy and thrifty protocols on the majorization lattice}
\label{sec:new protocol}


We now introduce two optimal probabilistic state conversion protocols which take advantage of the notions of OCP and OCR states on the majorization lattice (see Fig. \ref{fig:lattice_two_protocols} for a precise diagrammatic representation of both protocols). As usual, Alice and Bob initially share a pure bipartite state $\ket{\psi}$ that they wish to transform into a target state $\ket\phi$ by using only LOCC operations. Further, we assume that $\ket\psi$ and $\ket\phi$ are incomparable. The protocol passing through the OCP state works as follows.  First, Alice and Bob transform $\ket{\psi}$ into the OCP of the initial and target states, denoted as $\ket{\psi \lor \phi}$, via a deterministic LOCC transformation. This conversion is possible as a consequence of Nielsen's theorem since, by definition, $\lambda_\psi \prec \lambda_{\psi \lor \phi}$. Then, Alice and Bob apply to $\ket{\psi \lor \phi}$ the probabilistic LOCC transformation\footnote{It consists of a deterministic step from $\ket{\psi \lor \phi}$ to the intermediate state $\ket{\chi}$, followed by a probabilistic step from $\ket{\chi}$ to $\ket\phi$.} presented in Sec. \ref{sec:probabilistic state conversions}  in order to obtain the target state $\ket\phi$. This clearly cannot be done deterministically since $\lambda_{\psi \lor \phi} \nprec \lambda_{\phi}$ (actually, $\lambda_{\psi \lor \phi} \succ \lambda_{\phi}$). We name this protocol \textit{greedy} because the  deterministic (rewarding) LOCC transformation is given priority and precedes the probabilistic (less rewarding) LOCC transformation.

Following a similar logic, we define another protocol using the OCR state instead. In this protocol, Alice and Bob first transform $\ket{\psi}$ into the OCR of the initial and target states, denoted as $\ket{\psi \land \phi}$, using the probabilistic LOCC transformation\footnote{It consists of a deterministic step from $\ket{\psi}$ to the intermediate state $\ket{\zeta}$, followed by a probabilistic step from $\ket{\zeta}$ to $\ket{\psi \land \phi}$.} presented in Sec. \ref{sec:probabilistic state conversions}. This transformation can only be made probabilistically  because $\lambda_\psi  \nprec\lambda_{\psi \land \phi}$ (actually, $\lambda_\psi \succ \lambda_{\psi \land \phi}$). Then, the second phase consists in Alice and Bob transforming $\ket{\psi \land \phi}$ into $\ket{\phi}$ via a deterministic LOCC operation, which is obviously possible because $\lambda_{\psi \land \phi} \prec \lambda_\phi$. We name this protocol \textit{thrifty} because the deterministic (rewarding) LOCC transformation is postponed after the probabilistic (less rewarding) LOCC transformation. 


Interestingly, we will prove that the greedy and thrifty protocols both reach the optimal conversion probability [the same as that of Vidal's protocol, see Eq. \eqref{eq:optimal proba conversion}]. Furthermore, we will show that the thrifty protocol better preserves entanglement, on average, than the greedy protocol. More precisely, if the thrifty protocol fails, the residual state, which we call $\ket{\nu}$, is more entangled than the residual state of the greedy protocol, called $\ket{\xi}$, in the sense that $\lambda_\nu \prec \lambda_\xi$, or equivalently $\ket\nu \overset{\text{LOCC}}\longrightarrow \ket\xi$.

\begin{figure}
    \centering    
    \includegraphics[scale=0.75]{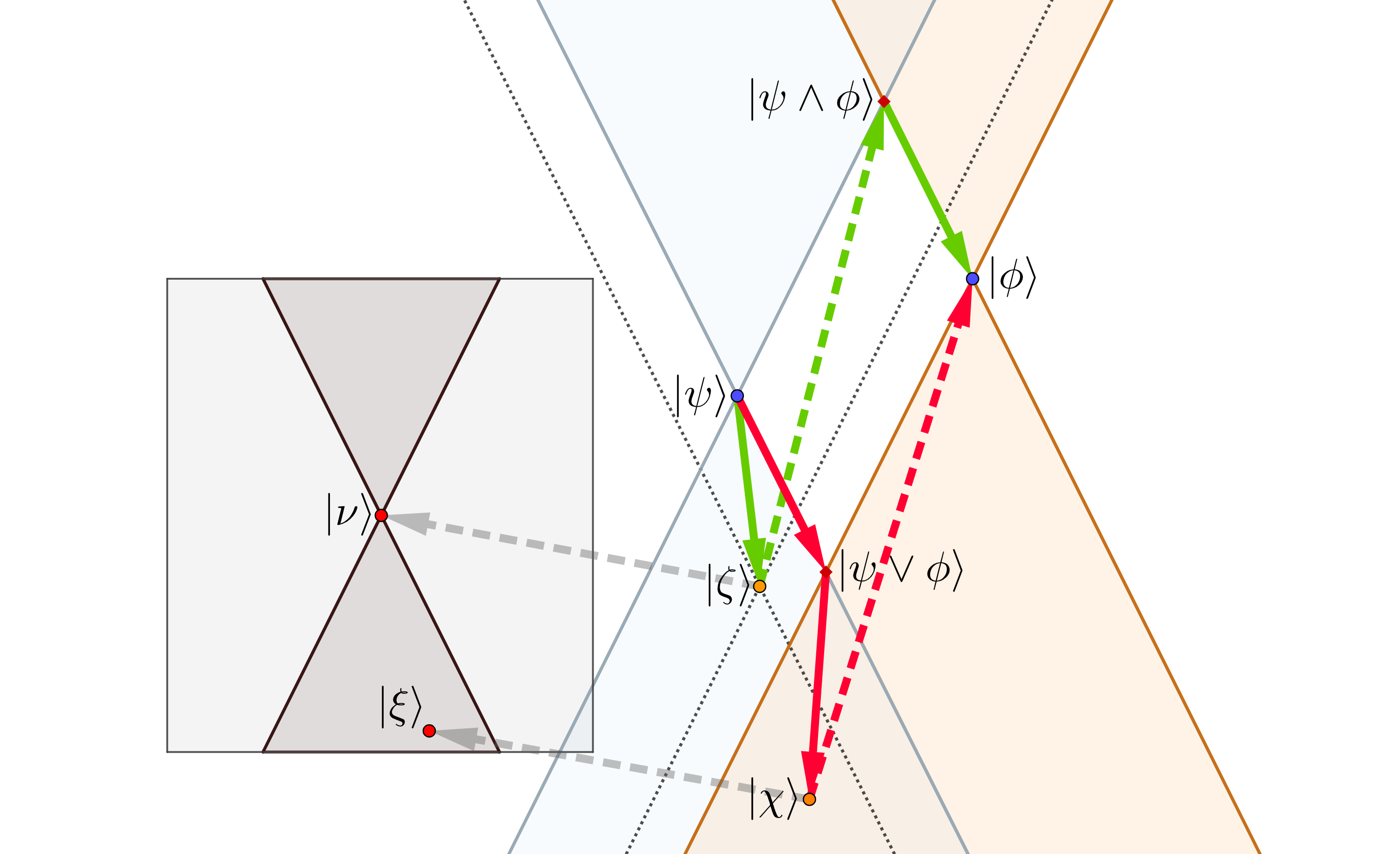}
    \caption{Diagrammatic representation of the greedy (red) and thrifty (green) protocols converting $\ket\psi$ into $\ket\phi$ on the majorization lattice. The probabilistic LOCC transformation of both protocols is itself split into two steps: a deterministic step (represented by a bold arrow) towards an intermediate state (either $\ket\chi$ or $\ket\zeta$) followed by a purely probabilistic step (represented by a dashed arrow) involving a local two-outcome measurement. If the conversion is successful, the resulting state is majorized by (is more entangled than) the initial state  ($\ket \phi \prec \ket{\psi \lor \phi}$ and $\ket{\psi \land \phi} \prec \ket \psi$). The left inset represents the residual states if the conversion has failed, emphasizing that the residual state $\ket\nu$ of the thrifty protocol is majorized by (hence, is more entangled than) the residual state $\ket\xi$ of the greedy protocol. Note also that $\ket{\zeta} \prec \ket{\chi}$.
       }
    \label{fig:lattice_two_protocols}
\end{figure}





Note that if $\ket\psi$ and $\ket\phi$ were comparable instead of incomparable states, then the greedy and thrifty protocols would be equal and would coincide with Vidal's conversion protocol from $\ket\psi$ to $\ket\phi$ (which is probabilistic when $\lambda_\phi \prec \lambda_\psi$, or even deterministic when $\lambda_\psi \prec \lambda_\phi$). Hence, we will only be interested in the case of incomparable Schmidt vectors $\lambda_\psi$ and $\lambda_\phi$, \textit{i.e.}, situations where a deterministic LOCC conversion is impossible in both directions, from $\ket\psi$ to $\ket\phi$ and from $\ket\phi$ to $\ket\psi$.


Let us first consider the easiest case of the greedy protocol, which can be brought to Vidal's protocol simply by merging the sequence of two deterministic steps $\ket\psi \to \ket{\psi \lor \phi} \to \ket\chi$ into a single deterministic step $\ket\psi \to \ket\chi$. Indeed, in the second phase of the greedy protocol (\textit{i.e.}, the probabilistic LOCC transformation from $\ket{\psi \lor \phi}$ to $\ket\phi$), the intermediate state happens to be the same as the intermediate state $\ket\chi$ of the transformation from $\ket\psi$ to $\ket\phi$ in Vidal's protocol, defined in Eq. (\ref{eq:approximate state Vidal}). This is true because this intermediate state $\ket\chi$ has been shown to belong to the lower cone of $\ket{\psi \lor \phi}$ in Ref.  \cite{BosykEtAl_ApproximateTransformationsOfBipartite_2017}, and is thus deterministically reachable from $\ket\psi$. Since, by definition, the intermediate state is the closest state to $\ket\phi$ that is  deterministically attainable from $\ket\psi$ [see Eq. (\ref{eq:max fidelity state})], both intermediate states coincide. Hence, the greedy protocol is also optimal.

The rest of this section is devoted to the analysis of the thrifty protocol, which is more involved. Before proving the optimality of the thrifty protocol, let us prove the following rather intuitive lemma.

\begin{lemma}
\label{lemma:entanglement monotones}
    The entanglement monotone for the OCR of two bipartite pure states $\ket{\psi}$ and $\ket{\phi}$, $E_l(\psi \land \phi)$, is equal to the maximum between the entanglement monotones of the initial state $E_l(\psi)$ and  target state $E_l(\phi)$, \textit{i.e.},
    \begin{equation}
        E_l(\psi \land \phi) = \max\left\{{E_l(\psi), E_l(\phi)}\right\}, \quad \forall l \in [1,d].
    \end{equation}
\end{lemma}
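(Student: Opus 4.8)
The plan is to work directly from Definition~\ref{def:meet} of the meet, translating it into a statement about the entanglement monotones $E_l$. Recall that $E_l(\psi) = \sum_{i=l}^d \lambda_\psi^{(i)}$, which by normalization equals $1 - \sum_{i=1}^{l-1}\lambda_\psi^{(i)}$. Thus the partial sums appearing in Definition~\ref{def:meet} are exactly complements of the $E_l$'s: writing $S_k(\psi) = \sum_{j=1}^k \lambda_\psi^{(j)} = 1 - E_{k+1}(\psi)$ (with $E_{d+1}:=0$), the meet $\lambda_{\psi\land\phi}=(m_1,\dots,m_d)$ has components $m_i = \min\{S_i(\psi),S_i(\phi)\} - \min\{S_{i-1}(\psi),S_{i-1}(\phi)\}$, so that the partial sums of the meet telescope to $S_k(\psi\land\phi) = \min\{S_k(\psi), S_k(\phi)\}$ for all $k$.

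From there the computation is immediate: for any $l\in[1,d]$,
\begin{equation}
  E_l(\psi\land\phi) = 1 - S_{l-1}(\psi\land\phi) = 1 - \min\{S_{l-1}(\psi), S_{l-1}(\phi)\} = \max\{1 - S_{l-1}(\psi),\, 1 - S_{l-1}(\phi)\} = \max\{E_l(\psi), E_l(\phi)\},
\end{equation}
using that $1-\min\{a,b\} = \max\{1-a,1-b\}$. This handles $l\in[2,d]$; the case $l=1$ is trivial since $E_1(\psi)=E_1(\phi)=E_1(\psi\land\phi)=1$.

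The only points that require a little care, rather than being genuine obstacles, are bookkeeping ones: first, one should note that Definition~\ref{def:meet} already guarantees $\lambda_{\psi\land\phi}\in\mathcal P_d$ (in particular its components sum to one and are sorted decreasingly), so the identity $E_l(\psi\land\phi)=1-S_{l-1}(\psi\land\phi)$ is legitimate; and second, one must make sure the index conventions match — the convention $S_0=0$ from Definition~\ref{def:meet} together with $E_{d+1}=0$ makes the endpoints $l=1$ and (vacuously) beyond $l=d$ consistent. I do not expect any real obstacle here: the lemma is essentially a restatement of the definition of the meet after the substitution $E_l \leftrightarrow 1-S_{l-1}$, and the whole proof is the telescoping sum plus the elementary identity $1-\min = \max(1-\cdot)$.
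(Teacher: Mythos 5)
Your proof is correct and follows essentially the same route as the paper's: telescoping the partial sums in Definition~\ref{def:meet} and then applying the identity $1-\min\{a,b\}=\max\{1-a,1-b\}$ together with normalization. The only cosmetic difference is that you telescope the head sums $S_k$ and convert to $E_l$ at the end, whereas the paper telescopes the tail sum $\sum_{i=l}^d$ directly; the content is identical.
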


\begin{proof}
    We start from the definition of the $l$th entanglement monotone for the OCR state,
    \begin{equation}
        E_l(\psi \land \phi) = \sum_{i=l}^d \lambda_{\psi \land \phi}^{(i)},
    \end{equation}
    which can be rewritten, by means of Definition \ref{def:meet}, as
    \begin{multline}
        E_l(\psi \land \phi) =  \sum_{i=l}^d \left( \min\left\{\sum_{j=1}^i \lambda_\psi^{(j)}, \sum_{j=1}^i \lambda_\phi^{(j)}\right\} \right.\\
        - \left. \min\left\{\sum_{j=1}^{i-1} \lambda_\psi^{(j)}, \sum_{j=1}^{i-1} \lambda_\phi^{(j)}\right\}\right).
    \end{multline}
    Since all terms cancel out two by two, except for the first and last terms, we obtain
    \begin{equation}
        E_l(\psi \land \phi) = 1- \min\left\{\sum_{j=1}^{l-1}\lambda_\psi^{(j)}, \sum_{j=1}^{l-1}\lambda_\phi^{(j)}\right\},
    \end{equation}
    or equivalently,
    \begin{align}
        E_l(\psi \land \phi) &= \max\left\{\sum_{j=l}^{d}\lambda_\psi^{(j)}, \sum_{j=l}^{d}\lambda_\phi^{(j)}\right\},\\
        &= \max\left\{{E_l(\psi), E_l(\phi)}\right\},
    \end{align}
    hence completing the proof, which holds $\forall l \in [1,d]$.
\end{proof}

We can now move to our first main result, namely the optimality of the thrifty protocol. Since the second phase of the thrifty protocol, converting $\ket{\psi \land \phi}$ into $\ket\phi$ (\textit{i.e.}, the third green arrow in Fig. \ref{fig:lattice_two_protocols}), is deterministic, all we need to consider is the first phase, namely the probabilistic LOCC transformation from $\ket\psi$ to $\ket{\psi \land \phi}$ (\textit{i.e.}, the two first green arrows in Fig. \ref{fig:lattice_two_protocols}). 
 In the following theorem, we prove that the maximum probability of achieving such a transformation from $\ket{\psi}$ to $\ket{\psi \land \phi}$ is equivalent to that of realizing the transformation from $\ket{\psi}$ to $\ket{\phi}$. In some sense, with this specific probability, one may as well stop the thrifty protocol halfway in order to produce the more entangled state $\ket{\psi \land \phi}$ rather than $\ket{\phi}$.

\begin{theorem}
\label{thm:optimal proba}
Let $\ket\psi$ and $\ket\phi$ be two bipartite pure states. The optimal probability of conversion from $\ket\psi$ to $\ket{\psi \land \phi}$, which we name $\mathfrak{r}_1$, is equal to that of the optimal conversion from $\ket\psi$ to $\ket\phi$, denoted as $r_1$ in Eq. (\ref{eq:r1}).
\end{theorem}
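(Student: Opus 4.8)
The plan is to compute the optimal conversion probability from $\ket\psi$ to $\ket{\psi \land \phi}$ directly from Vidal's formula, Eq.~\eqref{eq:optimal proba conversion}, and show it equals $r_1$. By Vidal's theorem, the optimal probability of converting $\ket\psi$ into $\ket{\psi\land\phi}$ is
\begin{equation}
\mathfrak r_1 = \min_{l\in[1,d]} \frac{E_l(\psi)}{E_l(\psi\land\phi)}.
\end{equation}
The key input is Lemma~\ref{lemma:entanglement monotones}, which gives $E_l(\psi\land\phi) = \max\{E_l(\psi),E_l(\phi)\}$. Substituting, the ratio becomes $E_l(\psi)/\max\{E_l(\psi),E_l(\phi)\}$, which equals $1$ whenever $E_l(\psi)\ge E_l(\phi)$ and equals $E_l(\psi)/E_l(\phi)$ whenever $E_l(\psi) < E_l(\phi)$.

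First I would observe that since $\psi$ and $\phi$ are incomparable, there is at least one index $l$ with $E_l(\psi) < E_l(\phi)$, so the minimum is genuinely attained on that subset of indices and is strictly less than $1$. On the set of indices where $E_l(\psi)\ge E_l(\phi)$ the ratio $E_l(\psi)/E_l(\psi\land\phi)$ is exactly $1$, hence never the minimizer. Therefore
\begin{equation}
\mathfrak r_1 = \min_{l\,:\,E_l(\psi)<E_l(\phi)} \frac{E_l(\psi)}{E_l(\phi)} = \min_{l\in[1,d]} \frac{E_l(\psi)}{E_l(\phi)} = r_1,
\end{equation}
where the middle equality holds because adjoining indices with ratio $\ge 1$ to a minimization whose current minimum is $<1$ does not change the value. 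The last equality is the definition of $r_1$ in Eq.~\eqref{eq:r1}.

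I do not anticipate a serious obstacle: the argument is essentially a one-line substitution once Lemma~\ref{lemma:entanglement monotones} is in hand. The only point requiring a word of care is the claim that the minimum over all $l$ coincides with the minimum over the restricted index set $\{l : E_l(\psi) < E_l(\phi)\}$ — this needs the remark that incomparability guarantees this set is nonempty and that $r_1 < 1$, which itself follows from $E_1(\psi)=E_1(\phi)=1$ together with incomparability (equivalently, from the fact that $\lambda_\psi\nprec\lambda_\phi$ forces some partial sum of $\psi$ to exceed the corresponding partial sum of $\phi$, i.e.\ some $E_l(\psi) < E_l(\phi)$). It may also be worth noting explicitly, as the paragraph preceding the theorem does, that this completes the proof of optimality of the probabilistic phase of the thrifty protocol, since the subsequent step $\ket{\psi\land\phi}\to\ket\phi$ is deterministic and hence does not degrade the overall success probability.
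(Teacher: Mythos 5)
Your proposal is correct and follows essentially the same route as the paper: apply Vidal's formula, substitute $E_l(\psi\land\phi)=\max\{E_l(\psi),E_l(\phi)\}$ from Lemma~\ref{lemma:entanglement monotones}, and discard the indices where the ratio equals $1$ since the minimum cannot exceed $1$ (guaranteed by $l=1$). The only cosmetic difference is that you invoke incomparability to argue the restricted index set is nonempty, whereas the paper's phrasing also covers the comparable case where both minima equal $1$.
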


\begin{proof}
We need to show that $\mathfrak{r}_1 = r_1$, where $r_1$ is given by Eq. (\ref{eq:r1}) and $\mathfrak{r}_1$ is the optimal probability of conversion from $\ket\psi$ to $\ket{\psi \land \phi}$.
We know that this probabilistic transformation can be performed optimally using Vidal's protocol, so $\mathfrak{r}_1$ can be calculated with Eq. (\ref{eq:r1}), namely,
\begin{equation}
\label{eq:mathfrak r1}
    \mathfrak{r}_1 = \underset{l \in [1, d]}{\min}\frac{E_l(\psi)}{E_l(\psi \land \phi)} \equiv  \frac{E_{\mathfrak{l}_1}(\psi)}{E_{\mathfrak{l}_1}(\psi \land \phi)} ,
\end{equation}
where $\mathfrak{l}_1$ is defined as the smallest $l \in [1,d]$ minimizing $E_l(\psi)/E_l(\psi \land \phi)$.
With the use of Lemma \ref{lemma:entanglement monotones}, we can write
\begin{equation}
\label{eq:frac_r1}
    \mathfrak{r}_1 = \underset{l \in [1, d]}{\min} ~ \frac{E_l(\psi)}{\max\left\{{E_l(\psi), E_l(\phi)}\right\}} ,
\end{equation}
which is to be compared with $r_1$. We obviously have
\begin{equation}
    \frac{E_l(\psi)}{\max\left\{{E_l(\psi), E_l(\phi)}\right\}} \leq \frac{E_l(\psi)}{E_l(\phi)}, \quad \forall l \in [1,d].
\end{equation}
For any value of $l$ such that $ E_l(\phi) < E_l(\psi)$, we have
\begin{equation}
    1 = \frac{E_l(\psi)}{\max\left\{{E_l(\psi), E_l(\phi)}\right\}} < \frac{E_l(\psi)}{E_l(\phi)}, 
\end{equation}
implying that such a value of $l$ can be disregarded in the minimization over $l$ that yields $r_1$ in Eq. \eqref{eq:r1} or $\mathfrak{r}_1$ in Eq. \eqref{eq:frac_r1} since, when $l=1$, $E_1(\psi)= E_1(\phi)= 1$ so the minimum is guaranteed not to exceed 1. Thus, we may restrict the minimization over the values of $l$ such that $ E_l(\phi) \geq E_l(\psi)$, for which
\begin{equation}
\frac{E_l(\psi)}{\max\left\{{E_l(\psi), E_l(\phi)}\right\}} = \frac{E_l(\psi)}{E_l(\phi)}.
\end{equation}
This implies that we have equal conversion probabilities, 
\begin{equation}
    \mathfrak{r}_1 = r_1,
    \label{eq:equal-probabilities}
\end{equation}
hence we have proven that the thrifty protocol is also an optimal probabilistic protocol.



\end{proof}

We now turn to our second main result, namely that entanglement is better preserved, on average, by the thrifty protocol rather than the greedy protocol. Let us first prove the following lemma.

\begin{lemma}
\label{lemma:hadamard}
    For all $\textbf{x}$ and $\textbf{y} \in \mathcal{P}_d$ such that $\textbf{x} \prec \textbf{y}$ and for all decreasingly ordered vectors $\textbf{a} \in \mathbb{R}^d$ such that $\sum_{i=1}^d a_ix_i = \sum_{i=1}^d a_iy_i = 1$, we have
    \begin{equation}
        \textbf{a} \odot \textbf{x} \prec \textbf{a} \odot \textbf{y},
    \label{eq:lemma2}    
    \end{equation}
     where $\odot$ denotes the Hadamard (element-wise) product.
\end{lemma}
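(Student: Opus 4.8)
The plan is to read the desired relation off the partial-sum definition of majorization, using that the normalization $\sum_i a_ix_i=\sum_i a_iy_i=1$ makes the hypotheses far more rigid than they look. The easy point first: the vectors $\textbf{a}\odot\textbf{x}$ and $\textbf{a}\odot\textbf{y}$ have the same component sum (equal to $1$), so $\textbf{a}\odot\textbf{x}\prec\textbf{a}\odot\textbf{y}$ reduces to the $d-1$ partial-sum inequalities. When the multiplier $\textbf{a}$ has non-negative entries --- which is the situation in the application below --- both $\textbf{a}\odot\textbf{x}$ and $\textbf{a}\odot\textbf{y}$ are already arranged non-increasingly (the entrywise product of two non-negative non-increasing sequences is non-increasing), so no re-sorting is needed and a single Abel summation closes the case: with $X_k=\sum_{j=1}^k x_j\le Y_k=\sum_{j=1}^k y_j$ one has $\sum_{i=1}^k a_ix_i=a_kX_k+\sum_{i=1}^{k-1}(a_i-a_{i+1})X_i$, and since $a_k\ge0$ and $a_i-a_{i+1}\ge0$ the right-hand side can only grow when every $X_i$ is replaced by $Y_i$, that is, $\sum_{i=1}^k a_ix_i\le\sum_{i=1}^k a_iy_i$.

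For a general $\textbf{a}\in\mathbb{R}^d$ the obstacle is that $\textbf{a}\odot\textbf{x}$ need no longer be sorted, so the Abel argument fails verbatim; here I would lean on rigidity. Since $\textbf{x}\prec\textbf{y}$ with both vectors non-increasing, $\textbf{x}$ is reached from $\textbf{y}$ by a finite chain of Robin Hood transfers $\textbf{z}\mapsto\textbf{z}-\epsilon(\textbf{e}_p-\textbf{e}_q)$ with $p<q$ and $\epsilon>0$, each intermediate vector still non-increasing \cite{MarshallOlkinArnold_Inequalities_2011}. Such a step changes $\sum_i a_iz_i$ by $-\epsilon(a_p-a_q)\le0$, because $\textbf{a}$ is non-increasing. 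The net change along the whole chain equals $\sum_i a_ix_i-\sum_i a_iy_i=0$; being a sum of non-positive terms, each term must vanish, so $a_p=a_q$ for every transfer, which (again by monotonicity of $\textbf{a}$) forces $\textbf{a}$ to be constant on the whole interval $[p,q]$. Hence every transfer in the chain is confined to a maximal level set of $\textbf{a}$.

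It then remains to patch the blocks together. Let $B_1,\dots,B_m$ be the partition of $\{1,\dots,d\}$ into maximal intervals on which $\textbf{a}$ is constant, with $\textbf{a}|_{B_t}=c_t\textbf{1}$. Restricting the transfer chain to the indices of a fixed block $B_t$ --- transfers in the other blocks leave those coordinates, and the block sum, untouched --- displays $\textbf{y}|_{B_t}$ being brought to $\textbf{x}|_{B_t}$ by Robin Hood transfers, so $\textbf{x}|_{B_t}\prec\textbf{y}|_{B_t}$; multiplying by the scalar $c_t$ preserves majorization (trivial for $c_t\ge0$, and for $c_t<0$ it follows from $\textbf{u}\prec\textbf{v}\Leftrightarrow-\textbf{u}\prec-\textbf{v}$), hence $(\textbf{a}\odot\textbf{x})|_{B_t}\prec(\textbf{a}\odot\textbf{y})|_{B_t}$; and since majorization is stable under concatenation of blocks (a direct sum of doubly stochastic matrices is doubly stochastic), combining the blocks yields $\textbf{a}\odot\textbf{x}\prec\textbf{a}\odot\textbf{y}$. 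The crux is the middle paragraph: recognizing that $\sum_i a_ix_i=\sum_i a_iy_i$ is exactly the equality case of the rearrangement inequality $\langle\textbf{a},\textbf{x}\rangle\le\langle\textbf{a},\textbf{y}\rangle$, and converting it into the structural statement that $\textbf{x}$ and $\textbf{y}$ differ only inside the level sets of $\textbf{a}$; the remainder is routine bookkeeping about the behavior of majorization under scaling and concatenation.
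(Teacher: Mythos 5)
Your proof is correct, and its first paragraph is essentially the paper's own argument: the representation $a_i=\sum_{j=i}^d\mu_j$ with $\mu_j\ge 0$ used in the paper is exactly your Abel summation with $\mu_j=a_j-a_{j+1}$ (and $\mu_d=a_d$), and both arguments then conclude from $X_k\le Y_k$. Where you genuinely diverge is in taking the hypothesis $\textbf{a}\in\mathbb{R}^d$ at face value and treating sign-changing $\textbf{a}$ by a separate rigidity argument (Robin Hood transfers confined to the level sets of $\textbf{a}$, blockwise majorization, scaling and concatenation); this part is sound, including the observation that $\textbf{u}\prec\textbf{v}\Leftrightarrow-\textbf{u}\prec-\textbf{v}$ and that a direct sum of doubly stochastic matrices is doubly stochastic. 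In fact this extra case covers something the paper's proof silently does not: the decomposition with $\mu_j\ge0$ forces $a_d\ge 0$, and identifying $\sum_{i=1}^k a_ix_i$ with the \emph{sorted} partial sums required by the majorization definition presupposes that $\textbf{a}\odot\textbf{x}$ and $\textbf{a}\odot\textbf{y}$ are already non-increasing, which again needs $\textbf{a}\ge 0$. So the paper's proof really establishes the lemma for non-negative decreasing $\textbf{a}$ (which suffices for both applications, where $\textbf{a}$ is the ratio vector $\textbf{\textit{r}}$ or the measurement vector $\textbf{\textit{n}}$), whereas your version proves the statement as literally written. The only point worth making explicit in your write-up is the standard fact you invoke from the literature, namely that for sorted $\textbf{x}\prec\textbf{y}$ the chain of T-transforms can be chosen so that each transfer moves mass from a smaller index to a larger one; that is what makes each increment $-\epsilon(a_p-a_q)$ non-positive.
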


\begin{proof} 
The last equality [see Eq. \eqref{equality majorization}] in the majorization relation, Eq. \eqref{eq:lemma2}, is trivially fulfilled by hypothesis, so we need to prove the $d-1$ inequalities [see Eq. \eqref{eq:majo}] 
\begin{equation}
\label{eq:lemma hadamard majo relation}
       \sum_{i=1}^{k} a_i(y_i-x_i) \geq 0, \quad \forall k \in [1, d-1],
\end{equation}
given the majorization hypothesis $\textbf{\textit{x}} \prec \textbf{\textit{y}}$, which reads as
    \begin{equation}
        \sum_{i=1}^{k} (y_i-x_i) \geq 0 , \quad \forall k \in [1, d-1].
        \label{eq:majorization-hypothesis}
    \end{equation}
The decreasing rearrangement of vector \textbf{\textit{a}} allows us to write its components in the form
\begin{equation}
\label{eq:lemma2 a_i}
    a_i = \sum_{j=i}^d \mu_j, \quad \forall i \in [1, d],
\end{equation}
where $\mu_j \geq 0$, for all $j \in [1,d]$.
By plugging in Eq. (\ref{eq:lemma2 a_i}) into Eq. \eqref{eq:lemma hadamard majo relation}, we have to prove
\begin{equation}
\label{lemma1:eq1}
    \sum_{i=1}^{k} (y_i-x_i) \sum_{j=i}^d \mu_j  \geq 0, \quad \forall k \in [1, d-1],
\end{equation}
which can be rewritten by interchanging the two summation signs as
\begin{equation}
    \sum_{j=1}^d \mu_j \sum_{i=1}^{\min(k,j)} (y_i-x_i) \geq 0, \quad \forall k \in [1, d-1].
\end{equation}
These inequalities are clearly satisfied given $\mu_j \geq 0$ and the majorization hypothesis, Eq. \eqref{eq:majorization-hypothesis}, thereby completing the proof.

\end{proof}

With this in mind, we can prove our second main result, namely that the residual state $\ket\nu$ of the thrifty protocol is majorized by the residual state $\ket\xi$ of the greedy protocol. This is the content of the following theorem.

\begin{theorem}
\label{thm:better entanglement resource}
    Let $\ket\psi$ and $\ket\phi$ be two bipartite pure states and let $\ket\xi$ be the residual state if the two-outcome probabilistic protocol for converting $\ket\psi$ into $\ket\phi$ fails and $\ket\nu$ be the residual state if the two-outcome probabilistic protocol for converting $\ket\psi$ into $\ket{\psi \land \phi}$ fails. Both residual states are related via the majorization relation
    \begin{equation}
        \lambda_\nu \prec \lambda_\xi.
    \end{equation}
\end{theorem}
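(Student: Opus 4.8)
The plan is to compute the Schmidt vectors of $\ket\xi$ and $\ket\nu$ explicitly using Vidal's construction and then show the desired majorization by invoking Lemma~\ref{lemma:hadamard}. First I would recall that by Theorem~\ref{thm:optimal proba} both failing branches occur with the same probability $1-r_1$, since $\mathfrak r_1 = r_1$. The greedy residual state $\ket\xi$ arises from applying the measurement $\hat N$ (built from the ratios $r_j$ and thresholds $l_j$ attached to the pair $(\psi,\phi)$) to the intermediate state $\ket\chi$ with $\lambda_\chi^{(i)} = r_j\lambda_\phi^{(i)}$ for $i\in[l_j,l_{j-1}-1]$. Concretely, on the block $i\in[l_j,l_{j-1}-1]$ the operator $\hat N$ multiplies the amplitude by $\sqrt{1-r_1/r_j}$, so up to the normalization $1-r_1$,
\begin{equation}
\lambda_\xi^{(i)} \propto \Bigl(1-\tfrac{r_1}{r_j}\Bigr)\lambda_\chi^{(i)} = (r_j - r_1)\,\lambda_\phi^{(i)},\qquad i\in[l_j,l_{j-1}-1].
\end{equation}
The thrifty residual state $\ket\nu$ is obtained the same way but for the conversion $\ket\psi\to\ket{\psi\land\phi}$: it uses the ratios $\mathfrak r_j$ and thresholds $\mathfrak l_j$ attached to the pair $(\psi,\psi\land\phi)$, and the intermediate state $\ket\zeta$ with $\lambda_\zeta^{(i)} = \mathfrak r_j\,\lambda_{\psi\land\phi}^{(i)}$. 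So, analogously,
\begin{equation}
\lambda_\nu^{(i)} \propto (\mathfrak r_j - \mathfrak r_1)\,\lambda_{\psi\land\phi}^{(i)},\qquad i\in[\mathfrak l_j,\mathfrak l_{j-1}-1].
\end{equation}

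Next I would express these two vectors in the common form ``a decreasing vector $\mathbf a$ times a probability vector.'' Using Lemma~\ref{lemma:entanglement monotones}, $E_l(\psi\land\phi)=\max\{E_l(\psi),E_l(\phi)\}$, which lets me relate the thrifty thresholds/ratios to the greedy ones: the values of $l$ where the running sums of $\psi$ and $\phi$ cross are exactly the points where $E_l(\phi)\ge E_l(\psi)$ switches, and on those segments $\lambda_{\psi\land\phi}$ coincides piecewise with either $\lambda_\psi$ or $\lambda_\phi$ (rescaled). The key structural fact I want to extract is that $\lambda_\xi$ and $\lambda_\nu$ can both be written as $\mathbf a\odot\lambda_\chi$ and $\mathbf a'\odot\lambda_\zeta$ for suitable decreasing step-vectors, with $\lambda_\zeta\prec\lambda_\chi$ (this last relation is already asserted in the caption of Fig.~\ref{fig:lattice_two_protocols} and should follow because $\ket{\psi\land\phi}\prec\ket\phi$ pushes the intermediate state down the lattice). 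Then, after checking the normalization hypothesis $\sum_i a_i (\lambda_\zeta)_i = \sum_i a_i(\lambda_\chi)_i$ required by Lemma~\ref{lemma:hadamard} — which holds because both equal $1-r_1$ up to the same constant — Lemma~\ref{lemma:hadamard} delivers $\lambda_\nu\prec\lambda_\xi$ directly.

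The main obstacle I anticipate is bookkeeping: the threshold sequences $\{l_j\}$ for $(\psi,\phi)$ and $\{\mathfrak l_j\}$ for $(\psi,\psi\land\phi)$ are a priori different, and the piecewise-constant ``Hadamard factor'' vector that turns $\lambda_\chi$ into $\lambda_\xi$ (namely $1-r_1/r_j$ on block $j$) must be shown to be \emph{decreasing} in $i$ — this uses $r_1<r_2<\cdots<r_k$ — and likewise for the thrifty side. Reconciling the two partitions so that a single decreasing vector $\mathbf a$ works for both, or alternatively decomposing the comparison into the two regimes $E_l(\phi)\ge E_l(\psi)$ and $E_l(\phi)<E_l(\psi)$ and handling each with Lemma~\ref{lemma:hadamard}, is where the real care is needed. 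An alternative, possibly cleaner route avoiding explicit Schmidt coefficients: characterize $\lambda_\xi$ and $\lambda_\nu$ purely through their entanglement monotones $E_l$ — express $(1-r_1)E_l(\xi)$ and $(1-r_1)E_l(\nu)$ in terms of $E_l(\psi),E_l(\phi)$ and the ratios — and then verify the $d-1$ partial-sum inequalities of $\lambda_\nu\prec\lambda_\xi$ region by region. If that representation is clean enough it may bypass Lemma~\ref{lemma:hadamard} altogether, but I expect Lemma~\ref{lemma:hadamard} to be the intended tool, so I would try the Hadamard-product route first.
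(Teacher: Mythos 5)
Your overall route is the same as the paper's: write the residual states as Hadamard products of a decreasing step vector with the intermediate states, establish $\lambda_\zeta \prec \lambda_\chi$, and push everything through Lemma~\ref{lemma:hadamard}. However, there is a genuine gap at precisely the point you flag as ``where the real care is needed'' and then leave unresolved: you must show that the two block partitions coincide, i.e., that $\mathfrak{l}_j = l_j$ and $\mathfrak{r}_j = r_j$ for \emph{all} $j$, not just $j=1$. Without this, the thrifty-side measurement operator is built from a priori different ratios and thresholds, so you cannot write $\lambda_\xi = \textbf{\textit{n}} \odot \lambda_\chi$ and $\lambda_\nu = \textbf{\textit{n}} \odot \lambda_\zeta$ with the \emph{same} vector $\textbf{\textit{n}}$, and Lemma~\ref{lemma:hadamard} (which compares two probability vectors multiplied by one common decreasing vector) does not apply. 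Likewise, your asserted relation $\lambda_\zeta \prec \lambda_\chi$ is not ``already available'' from the figure caption --- it is itself a consequence of the identity of the partitions, since only then do you get $\lambda_\zeta = \textbf{\textit{r}} \odot \lambda_{\psi\land\phi}$ and $\lambda_\chi = \textbf{\textit{r}} \odot \lambda_\phi$ with a common decreasing $\textbf{\textit{r}}$, to which Lemma~\ref{lemma:hadamard} applies via $\lambda_{\psi\land\phi} \prec \lambda_\phi$.

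The paper closes this gap by iterating the argument of Theorem~\ref{thm:optimal proba}: at each stage $j$ one writes $\mathfrak{r}_j$ as a minimum over $l \in [1, l_{j-1}-1]$ of ratios whose denominators are $\max\{E_l(\psi), E_l(\phi)\} - E_{l_{j-1}}(\phi)$ (using Lemma~\ref{lemma:entanglement monotones} and the inductive hypothesis $\mathfrak{l}_{j-1} = l_{j-1}$), splits the indices into $\mathcal{L}^+$ (where the max is $E_l(\psi)$) and $\mathcal{L}^-$ (where it is $E_l(\phi)$), and shows the minimum is always attained in $\mathcal{L}^-$, where the ratio coincides with the one defining $r_j$. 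You would need to supply this induction (or an equivalent argument) to make your plan into a proof. Your explicit formulas $\lambda_\xi^{(i)} \propto (r_j - r_1)\lambda_\phi^{(i)}$ and the monotonicity of the factor $1 - r_1/r_j$ across blocks are correct and consistent with the paper once the partitions are identified; the alternative monotone-based route you sketch at the end would face the same bookkeeping problem and is not what the paper does.
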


\begin{proof}

To show this, we proceed in three steps. First, using the definitions from Eq. (\ref{eq:ri}),
we characterize the sequence of ratios $\mathfrak{r}_j$'s (and corresponding values $\mathfrak{l}_j$'s) associated with the probabilistic conversion from $\ket\psi$ to $\ket{\psi\land\phi}$ and prove that $\mathfrak{r}_j=r_j$ and $\mathfrak{l}_j = l_j$, $\forall j$. Then, we express the intermediate state $\ket{\zeta}$ of this conversion 
and show that it satisfies $\lambda_\zeta \prec \lambda_\chi$, where $\ket{\chi}$ corresponds to the intermediate state of the probabilistic conversion from $\ket\psi$ (or from $\ket{\psi \lor \phi}$) to $\ket\phi$, as shown in Fig. \ref{fig:lattice_two_protocols}. Finally, we use $\lambda_\zeta \prec \lambda_\chi$ in order to prove the desired majorization relation $\lambda_\nu \prec \lambda_\xi$.

From Theorem \ref{thm:optimal proba}, we already know that for $j=1$, $\mathfrak{r}_1 = r_1$ and $\mathfrak{l}_1=l_1$. In order to prove the same relations for higher values of $j$, we proceed by iteration. We can write $\mathfrak{r}_2$, by means of Eq. (\ref{eq:ri}), as
\begin{align}
    \mathfrak{r}_2 &= \underset{l \in [1, \mathfrak{l}_1 - 1]}{\min}\frac{E_l(\psi) - E_{\mathfrak{l}_1}(\psi)}{E_l(\psi \land \phi)-E_{\mathfrak{l}_1}(\psi \land \phi)},\nonumber\\
    &\equiv \frac{E_{\mathfrak{l}_2}(\psi) - E_{\mathfrak{l}_1}(\psi)}{E_{\mathfrak{l}_2}(\psi \land \phi)-E_{\mathfrak{l}_1}(\psi \land \phi)},   
\end{align}
which can be reexpressed, using $l_1 = \mathfrak{l}_1$ and $E_{\mathfrak{l}_1}(\psi \land \phi) = \max\left\{E_{\mathfrak{l}_1}(\psi), E_{\mathfrak{l}_1}(\phi)\right\} = E_{l_1}(\phi)$, as
\begin{equation}
    \mathfrak{r}_2 = \underset{l \in [1, l_1 - 1]}{\min}\frac{E_l(\psi) - E_{l_1}(\psi)}{\max\left\{{E_l(\psi), E_l(\phi)}\right\}-E_{l_1}(\phi)}.
\end{equation}
We choose to rewrite this last expression as
\begin{equation}
    \mathfrak{r}_2 = \underset{l \in [1, l_1 - 1]}{\min} \mathfrak{r}_{2,l}.
\end{equation}
We now divide the values of $l$ in two categories. Either $l \in \mathcal{L}^+$ when $\max\left\{{E_l(\psi), E_l(\phi)}\right\} = E_{l}(\psi)$, or $l \in \mathcal{L}^-$ when $\max\left\{{E_l(\psi), E_l(\phi)}\right\} = E_{l}(\phi)$.
It is possible to prove that there always exists one value $l_{\min} \in \mathcal{L}^-$ such that $\mathfrak{r}_{2, l_{\min}} \leq \mathfrak{r}_{2, l}, \forall l \in \mathcal{L}^+ \cup \mathcal{L}^-$. Hence,
\begin{equation}
    \mathfrak{r}_2 = r_2.
\end{equation}
Proceeding equivalently for each $j$, we show that 
\begin{equation}
    \mathfrak{r}_j = r_j \text{ and }  \mathfrak{l}_j = l_j , \quad \forall j\in [1,k].
\end{equation}
This allows us to write the vector of Schmidt coefficients of the intermediate state $\ket{\zeta}$ between $\ket{\psi}$ and $\ket{\psi \land \phi}$ as
\begin{equation}
    \lambda_\zeta^{(i)} = r_j \, \lambda_{\psi \land \phi}^{(i)}, \quad \text{if} \ i \in [l_j, l_{j-1}-1],\quad j \in [1,k].
\end{equation}
Thus, this vector can be expressed as a Hadamard product of two vectors, namely,
\begin{equation}
    \lambda_\zeta = \textbf{\textit{r}} \odot \lambda_{\psi \land \phi},
\end{equation}
where $(\textbf{\textit{r}})_i = r_j$, if $i \in [l_j, l_{j-1}-1], \forall i \in [1,d]$. 
Similarly, in view of Eq. (\ref{eq:approximate state Vidal}), the vector of Schmidt coefficients of the intermediate state $\ket{\chi}$ between $\ket{\psi}$ (or $\ket{\psi \lor \phi}$) and $\ket{\phi}$ can be rewritten   using this notation as
\begin{equation}
    \lambda_\chi = \textbf{\textit{r}} \odot \lambda_{\phi}.
\end{equation}
Now, we simply make use of Lemma \ref{lemma:hadamard}.
Since $\lambda_{\psi \land \phi} \prec \lambda_\phi$, the majorization relation
\begin{equation}
\lambda_\zeta = \textbf{\textit{r}} \odot \lambda_{\psi \land \phi} \prec \textbf{\textit{r}} \odot \lambda_{\phi} = \lambda_\chi,
\label{eq-majorization-zeta-chi}
\end{equation}
holds true provided the vector $\textbf{\textit{r}}$ is decreasingly ordered (by increasing index $i$). This can be easily checked from the structure of this vector, namely,
\begin{equation}
\textbf{\textit{r}} = (
\underset{\overset{\uparrow}{ l_k}}{r_k} \hdots  r_k | \hdots\hdots | \underset{\overset{\uparrow}{ l_2}}{r_2} \hdots r_2 | \underset{\overset{\uparrow}{ l_1}}{r_1} \hdots r_1 ),
\end{equation}
where each $l_j$ variable points to the index of the corresponding element in $\textbf{\textit{r}}$. 
Hence, $\lambda_\zeta \prec \lambda_\chi$, which means that the intermediate state $\ket{\zeta}$ of the thrifty protocol is more entangled than the intermediate state $\ket{\chi}$ of the greedy protocol, as can also be seen in Fig. \ref{fig:lattice_two_protocols}.

Finally, we deduce from Eq. \eqref {eq-majorization-zeta-chi} the fact that $\lambda_\nu \prec \lambda_\xi$, \textit{i.e.}, that if the protocol fails, the residual state of the thrifty protocol is more entangled than the residual state of the greedy protocol (see inset of Fig. \ref{fig:lattice_two_protocols}). To do this, first notice that 
\begin{equation}
\label{xi from measurement}
    \hat N \otimes \hat I \ket{\chi} = \sqrt{1-r_1} \ket{\xi},
\end{equation}
and 
\begin{equation}
\label{nu from measurement}
    \hat N \otimes \hat I \ket{\zeta} = \sqrt{1-r_1} \ket{\nu}.
\end{equation}
Given that $\hat N$ is diagonal [see Eq. (\ref{N})], we can rewrite Eqs. (\ref{xi from measurement}) and (\ref{nu from measurement}) using vectors of Schmidt coefficients and the Hadamard product as
\begin{align}
    \lambda_\xi = \textbf{\textit{n}} \odot \lambda_\chi,\\
    \lambda_\nu = \textbf{\textit{n}} \odot \lambda_\zeta,
\end{align}
where $(\textbf{\textit{n}})_i = \langle i | \hat N | i \rangle / \sqrt{1-r_1}, \forall i \in [1,d]$.
Finally, using again Lemma \ref{lemma:hadamard}, valid because of the fact that $\textbf{\textit{n}}$ is a decreasingly ordered vector and the fact that $\lambda_\zeta \prec \lambda_\chi$, we complete the proof that $\lambda_\nu \prec \lambda_\xi$.

\end{proof}

\section{Generalization to a collection of initial or final states}
\label{sec:generalization to multiple final states}

Let us first generalize Theorem \ref{thm:optimal proba} to a scenario involving multiple final states. Imagine that Alice and Bob possess a state $\ket\psi$ and that, instead of a single target state, there is a collection of $m$ possible target states $\{\ket{\phi_j}\}_{j=1}^m$ from which they will have to produce a single one (without knowing which one beforehand). The idea is that they can first perform a probabilistic LOCC transformation to the OCR state of the $m+1$ states, denoted as
\begin{equation}
    \ket{\psi \overset{m}{\underset{j=1}{\land}} \phi_j} \equiv \ket{\psi \land \phi_1\land \phi_2\land \cdots \land \phi_m},
\end{equation}
extending the essence of the thrifty protocol. Once they possess the OCR state, they can wait until they learn which state $\ket{\phi_j}$ they have to produce before performing the corresponding deterministic LOCC transformation from the OCR state to $\ket{\phi_j}$. Interestingly, the probabilistic LOCC from $\ket\psi$ to the OCR state has an optimal probability that is equal to the minimum optimal probability of all individual conversions from $\ket\psi$ to $\ket{\phi_j}$. In some sense, with this minimum probability corresponding to the hardest-to-reach target state $\ket{\phi_j}$, we may as well stop the protocol halfway to produce the more entangled OCR state. This is the content of the following theorem.


\begin{theorem}
\label{thm:last theorem}
Let $\ket\psi$ be a bipartite pure state 
and $\{\ket{\phi_j}\}_{j=1}^m$ be a collection of $m$ bipartite pure states. The optimal probability $p_{\max}^{[1]}$ of conversion from $\ket\psi$ to the OCR state of the $m+1$ states $\ket{\psi \overset{m}{\underset{j=1}{\land}} \phi_j}$ is equal to the minimum optimal probability among the $m$ possible individual conversions from $\ket\psi$ to $\ket{\phi_j}$, \textit{i.e.}, 
\begin{equation}
    p_{\max}^{[1]} = \underset{j\in [1,m]}{\min}\ p_{\max,j}^{[2]},
\end{equation}
where $p_{\max,j}^{[2]}$ denotes the optimal probability of conversion from $\ket\psi$ to $\ket{\phi_j}$.
\end{theorem}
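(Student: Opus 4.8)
The plan is to reduce Theorem~\ref{thm:last theorem} to Theorem~\ref{thm:optimal proba} by generalizing Lemma~\ref{lemma:entanglement monotones} to a collection of states, and then essentially repeating the argument of Theorem~\ref{thm:optimal proba} verbatim. First I would establish the multi-state analogue of Lemma~\ref{lemma:entanglement monotones}, namely that
\begin{equation}
    E_l\!\left(\psi \overset{m}{\underset{j=1}{\land}} \phi_j\right) = \max\left\{E_l(\psi),\, E_l(\phi_1),\,\dots,\, E_l(\phi_m)\right\}, \quad \forall l\in[1,d].
\end{equation}
This follows either by iterating the pairwise identity $E_l(a\land b) = \max\{E_l(a), E_l(b)\}$ using associativity of the meet in the majorization lattice, or directly from Definition~\ref{def:meet} extended to $m+1$ vectors (the telescoping-sum argument carries over, with the inner $\min$ now taken over $m+1$ partial sums). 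I expect this step to be routine.

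Next, with $q := p_{\max}^{[1]}$ computed via Vidal's formula Eq.~(\ref{eq:optimal proba conversion}) applied to the conversion $\ket\psi \to \ket{\psi\overset{m}{\underset{j=1}{\land}}\phi_j}$, I would write
\begin{equation}
    p_{\max}^{[1]} = \underset{l\in[1,d]}{\min}\ \frac{E_l(\psi)}{\max\{E_l(\psi), E_l(\phi_1),\dots,E_l(\phi_m)\}}.
\end{equation}
For each fixed $l$, the ratio equals $1$ if $E_l(\psi)$ is the maximum, and otherwise equals $E_l(\psi)/E_l(\phi_{j(l)})$ where $\phi_{j(l)}$ is a state achieving the maximum at that $l$. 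Since the overall minimum over $l$ is at most $1$ (attained trivially at $l=1$, where all monotones equal $1$), the values of $l$ contributing a ratio of exactly $1$ can be discarded, so
\begin{equation}
    p_{\max}^{[1]} = \underset{l\in[1,d]}{\min}\ \underset{j\in[1,m]}{\min}\ \frac{E_l(\psi)}{E_l(\phi_j)}
    = \underset{j\in[1,m]}{\min}\ \underset{l\in[1,d]}{\min}\ \frac{E_l(\psi)}{E_l(\phi_j)}
    = \underset{j\in[1,m]}{\min}\ p_{\max,j}^{[2]},
\end{equation}
where the last equality is again Eq.~(\ref{eq:optimal proba conversion}) applied to each individual conversion $\ket\psi\to\ket{\phi_j}$. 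One subtlety to handle carefully: for a $j$ that never achieves the maximum at any $l$ other than where the ratio is $1$, restricting to the "active" pairs $(l,j)$ might seem to drop $j$ entirely, but this only happens when $p_{\max,j}^{[2]} = 1$ (i.e. $\lambda_\psi \prec \lambda_{\phi_j}$), which cannot be the minimizer unless all of them equal $1$; so the identity still holds, and the swap of the two minima is of course unconditional.

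The main obstacle I anticipate is purely bookkeeping: making the swap of minima and the discarding of "ratio $=1$" terms airtight when several $\phi_j$ tie for the maximum at a given $l$, and confirming that no $j$ is spuriously lost. This is the exact same gap that appears in the proof of Theorem~\ref{thm:optimal proba} (where the claim "there always exists one value $l_{\min}\in\mathcal{L}^-$ such that $\mathfrak{r}_{2,l_{\min}}\le\mathfrak{r}_{2,l}$" is asserted without full detail), so I would resolve it once, cleanly, in the form: for any nonnegative reals, $\min_l \frac{a_l}{\max_j b_{l,j}}$ equals $\min_{l,j}\frac{a_l}{b_{l,j}}$ \emph{provided} the latter is $\le 1$ and $a_l \le \max_j b_{l,j}$ pointwise — a small lemma that then applies uniformly here and retroactively tightens Theorem~\ref{thm:optimal proba}. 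Everything else is a direct transcription of the two-state proof.
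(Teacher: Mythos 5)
Your proposal is correct and follows essentially the same route as the paper's own proof: generalize Lemma \ref{lemma:entanglement monotones} to $E_l(\psi \overset{m}{\underset{j=1}{\land}} \phi_j) = \max\{E_l(\psi), E_l(\phi_1),\dots,E_l(\phi_m)\}$, write $p_{\max}^{[1]}$ and $\min_j p_{\max,j}^{[2]}$ as double minima over $(l,j)$, discard the indices $l$ where $E_l(\psi)$ attains the maximum (since the overall minimum is pinned at or below $1$ by the $l=1$ term), and swap the order of the minima. The small auxiliary lemma you propose to make the discarding step airtight is a welcome tightening but does not change the argument.
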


\begin{proof}
From Eq. (\ref{eq:optimal proba conversion}), we have
\begin{equation}
    p_{\max}^{[1]} = \underset{l \in [1,d]}{\min} \frac{E_l(\psi)}{E_l(\psi \overset{m}{\underset{j=1}{\land}} \phi_j)},
\end{equation}
as well as
\begin{equation}
     p_{\max,j}^{[2]} = \underset{l \in [1,d]}{\min} \frac{E_l(\psi)}{E_l(\phi_j)},
\end{equation}
and we want to prove that
\begin{equation}
    p_{\max}^{[1]} = \underset{j \in [1,m]}{\min} \, p_{\max,j}^{[2]} \equiv p_{\max}^{[2]}.
\end{equation}

In order to do so, we can use the straightforward generalization of Lemma \ref{lemma:entanglement monotones}, namely,
\begin{equation}
    E_l(\psi \overset{m}{\underset{j=1}{\land}} \phi_j) = \max \left\{ E_l(\psi), E_l(\phi_1), \cdots, E_l(\phi_m)\right\}.
\end{equation}
Thus we have $p_{\max}^{[1]}=\underset{l\in[1,d]}{\min} \, p_{\max,l}^{[1]}$, where 
\begin{equation}
\label{eq:appendix pmax,l 1}
p_{\max,l}^{[1]} = \frac{E_l(\psi)}{\max \left\{ E_l(\psi), E_l(\phi_1), \cdots, E_l(\phi_m)\right\}},
\end{equation} 
and, by interchanging the minima, we can also write 
\begin{align}
     p_{\max}^{[2]} 
     &= \underset{l \in [1,d]}{\min} p_{\max,l}^{[2]},
\end{align}
where 
\begin{equation}
\label{eq:appendix pmax,l 2}
p_{\max,l}^{[2]} = \underset{j \in [1,m]}{\min} \frac{E_l(\psi)}{E_l(\phi_j)}.
\end{equation} 
Therefore, using Eqs. (\ref{eq:appendix pmax,l 1}) and (\ref{eq:appendix pmax,l 2}), it is easy to see that $p_{\max,l}^{[2]} \geq p_{\max,l}^{[1]} = 1$ for any value of $l$ such that  $\max \left\{ E_l(\psi), E_l(\phi_1), \cdots, E_l(\phi_m)\right\} = E_l(\psi)$. These values of $l$ can thus be disregarded in the minimization over $l$ since the minimum cannot exceed 1 (indeed,  $p_{\max,1}^{[2]} = p_{\max,1}^{[1]} = 1$ when $l=1$). Thus, we only have to consider the values of $l$ such that $\max \left\{ E_l(\psi), E_l(\phi_1), \cdots, E_l(\phi_m)\right\} = E_l(\phi_k)$, $k\in[1,m]$, implying  $p_{\max,l}^{[2]} = p_{\max,l}^{[1]} = E_l(\psi)/E_l(\phi_k)$.
Hence, the minimum over $l$ yields $p_{\max}^{[1]} = p_{\max}^{[2]}$,
which completes the proof.
\end{proof}

In other words, as long as Alice and Bob do not know which target state $\ket{\phi_j}$ they will have to reach, they may anticipate the probabilistic step and move to the OCR, but this can only be done with the minimum probability (corresponding to the worst-case target state). The benefit of this procedure is that once the target $\ket{\phi_j}$ is disclosed, the final step is deterministic.
Otherwise, they may wait until the target $\ket{\phi_j}$ is disclosed and only then perform a probabilistic transformation from $\ket{\psi}$ to $\ket{\phi_j}$, whose probability may be higher (depending on which is the target state). In a nutshell, a compromise can be made between doing the probabilistic transformation at first (making the rest fully deterministic regardless of the target state) or waiting to know which target state is wanted before doing the actual probabilistic transformation (possibly with a higher success probability). 

It is worth noting that the Procrustean method introduced in Ref.  \cite{BennettBernstein_Concentrating_1996} can be viewed as a special case of our thrifty protocol in a scenario where the target state is \textit{a priori} totally arbitrary. The Procrustean method is a local filtering which converts a single copy of a given partially entangled pure state $\ket{\psi}$ into the maximally entangled state $\ket{\Phi}$ of the same dimension. Indeed, when considering the thrifty protocol with the set of target states $\{\ket{\phi_j}\}$ being the whole set of states of the same dimension as the initial state, the OCR then becomes the maximally entangled state $\ket{\Phi}$. Hence, the thrifty protocol starts with the probabilistic conversion of $\ket{\psi}$ into $\ket{\Phi}$, which is thus equivalent to the Procrustean method\footnote{This is a consequence of the fact that Vidal's protocol itself can be seen as a generalization of the Procrustean method when the target state differs from the maximally entangled state of the same dimension as the initial state.}. Once the actual target state $\ket{\phi_j}$ is revealed, one may then deterministically convert $\ket{\Phi}$ into $\ket{\phi_j}$ (which is possible since $\ket{\Phi} \prec \ket{\phi_j}$, $\forall j$). Of course, this Procrustean procedure becomes suboptimal in situations where the set of target states is limited so the OCR is not maximally entangled, in which case the thrifty protocol comes with a higher success probability. 

It is also tempting to compare our multitarget-state thrifty protocol to the protocol introduced in Ref. \cite{JonathanPlenio_MinimalConditions_1999}. There, the conversion into an ensemble of pure states with associated probabilities is considered. Specifically, the conversion of $\ket{\psi}$ into the ensemble  $\{q_j,\ket{\phi_j} \}_{j=1}^m$ is shown to be possible via a LOCC transformation if and only if
\begin{equation}
\label{eq:jonathanplenio}
    \lambda_\psi \prec \sum_{j=1}^m q_j \lambda_{\phi_j},
\end{equation}
which extends Eq. \eqref{eq:nielsen majorization}. The conversion starts with a deterministic step towards an ``average" pure state whose Schmidt vector is equal to $\sum_{j=1}^m q_j \lambda_{\phi_j}$, followed by a probabilistic step yielding each state $\ket{\phi_j}$ with probability $q_j$.
Despite the apparent similarity between such a protocol and our multi-target thrifty protocol, the two have quite different purposes. Indeed, our protocol has a specific probability of succeeding, \textit{i.e.}, outputting the OCR, which is then deterministically converted into one of the desired target states $\ket{\phi_j}$ chosen at will. According to Theorem \ref{thm:last theorem}, the success probability is the minimum of the probabilities associated with each individual conversion (Vidal’s bound for each pair of initial and target states). In contrast, the protocol of Ref. \cite{JonathanPlenio_MinimalConditions_1999} is designed as a source that randomly outputs a state $\ket{\phi_j}$ drawn from a target ensemble. The weights $q_j$ in this ensemble can be chosen freely but must satisfy Eq. \eqref{eq:jonathanplenio} for any physically realizable ensemble. In particular, each $q_j$ must still be upper bounded by Vidal’s corresponding optimal conversion probability from $\ket\psi$ to $\ket{\phi_j}$.

To complete the picture, we may also consider a scenario involving multiple initial states $\{\ket{\psi_j}\}_{j=1}^m$ and a single target state $\ket\phi$, generalizing the greedy protocol. This time, the optimal probability of conversion $p_{\max}^{[1]}$ from the OCP state of the $m+1$ states $\ket{\phi \overset{m}{\underset{j=1}{\lor}}\psi_j}$\footnote{In analogy with the OCR, the OCP of multiple states can be understood as $\ket{\phi \overset{m}{\underset{j=1}{\lor}} \psi_j} \equiv \ket{\phi \lor \psi_1\lor \psi_2\lor \cdots \lor \psi_m}$.} to the target state $\ket\phi$ is equivalent to the minimum optimal probability between any of the $m$ possible individual conversions from $\ket{\psi_j}$ to $\ket{\phi}$. This is immediately understood because, as proven in Ref. \cite{BosykEtAl_ApproximateTransformationsOfBipartite_2017}, the intermediate state for the least probable of the $m$ possible conversions is located in the lower cone of $\ket{\phi \overset{m}{\underset{j=1}{\lor}}\psi_j}$. Therefore, the transformation from the OCP state $\ket{\phi \overset{m}{\underset{j=1}{\lor}}\psi_j}$ to $\ket{\phi}$ cannot be more probable than the least probable of the $m$ possible conversions from $\ket{\psi_j}$ to $\ket{\phi}$. Unlike the multi-target thrifty protocol, this multi-initial-state generalization of the greedy protocol does not  have a straightforward practical meaning since the deterministic transformation from some of the initial states $\ket{\psi_j}$ to the OCP state cannot be performed until the actual $\ket{\psi_j}$ is disclosed. However, it is relevant in a scenario involving a catalyst state \cite{JonathanPlenio_Catalyst_1999}. If one postpones the deterministic step and, instead, borrows the OCP state, then it is possible to implement the probabilistic step yielding $\ket{\phi}$ from the OCP state. Later on, once the actual $\ket{\psi_j}$ is disclosed, it can be deterministically converted into the OCP state, which is returned and plays therefore the role of a catalyst. Here again, there is a compromise between doing the probabilistic step at first (postponing the subsequent deterministic step) or waiting until the identity of the initial state is known before doing the probabilistic transformation (possibly with a higher success probability). 



\section{Conclusion}
\label{sec:conclusion}


The lattice structure of majorization uncovers two essential states in the context of entanglement transformations between incomparable bipartite pure states $\ket{\psi}$ and $\ket{\phi}$, namely the optimal common resource state (\textit{i.e.}, the \textit{meet} $\ket{\psi\land \phi}$ of the two states) and the optimal common product state (\textit{i.e.}, the \textit{join} $\ket{\psi\lor \phi}$ of the two states). We have shown that both states naturally appear when considering (single-copy) probabilistic LOCC transformations from $\ket{\psi}$ to $\ket{\phi}$. We have indeed defined two antipodal protocols, namely the greedy protocol, passing through $\ket{\psi\lor \phi}$, and the thrifty protocol, passing through $\ket{\psi\land \phi}$. Both protocols can be proven to be optimal (their success probability is maximum). However, while the greedy protocol is very similar to Vidal's protocol \cite{Vidal_EntanglementOfPureStatesForASingleCopy_1999}, the thrifty protocol is superior in that the entanglement resource is better preserved on average (in case of failure, its residual state is majorized by -- hence, is more entangled than -- the residual state of the greedy protocol).  Note that in case $\ket{\psi}$ and $\ket{\phi}$ are comparable, 
both the greedy and thrifty protocols reduce to  Vidal's protocol, which reflects that the incomparability between the states is an essential ingredient here.
Finally, we have shown that the greedy and thrifty protocols can be generalized to scenarios involving an arbitrary number of initial or final states. This underlines the operational relevance of the majorization lattice in the scope of quantum entanglement theory.

Overall, the current work sheds a new light on the resource theory of entanglement. Such a theory has proven to be essential in numerous areas within quantum information sciences. For example, the rapidly developing field of quantum networks \cite{Perseguers_QuantumRandomNetworks_2010, Malik_ConcurrencePercolation_2022} relies on the ability to convert and manipulate entanglement at the local scale so to establish useful entanglement between distant nodes. Another notable example is quantum thermodynamics \cite{HorodeckiOppenheim_FundamentalLimitations_2013, Brandao_TheSecondLaws_2015, Singh_PartialOrder_2021}, where the resource-theoretical approach is well suited to describe the allowed state transformations (here, thermo-majorization is the condition that reflects the existence of thermal operations) and where the notion of majorization lattice has already proven its relevance \cite{Korzekwa_StructureOfTheThermodynamicArrow_2017, deOliveiraJuniorEtAl_GeometricStructure_2022}. Thus, plugging the resource theory of entanglement into a majorization lattice, as we have sketched here, can be expected to be a very promising research avenue. We may conceive cryptographic or thermodynamic scenarios where the convertibility (or non-convertibility) from or towards the meet or join state is crucial.

Finally, it would be interesting to extend the current work to multipartite settings \cite{Neven_LocalTransformations_2021, ChitambarDuanShi_Tripartite_2008}. To our knowledge, the notions of optimal common resource or optimal common product have never been explored for more than two parties. For example, we know that true tripartite entanglement can be split into two categories, namely GHZ-type and W-type states \cite{DurVidalCirac_ThreeQubitsCanBeEntangled_2000}, but the determination of an optimal common resource state or optimal common product state within each class constitutes an open problem. In another direction, extending the present analysis to the conversion of mixed states and exploring the role of incomparability in this context would, if possible, even further broaden the applicability of this framework.




\acknowledgments
We warmly thank Ali Asadian for pointing out to us the relevance of the majorization lattice in the context of entanglement transformations as well as Michael Jabbour for useful discussions on majorization theory. We also acknowledge useful comments from an anonymous referee. S.D. is a FRIA grantee of the Fonds de la Recherche Scientifique – FNRS. M.A. acknowledges support from the European Union’s HORIZON Research and Innovation Actions under Grant Agreement No. 101080173 (CLUSTEC) and the European Union’s 2020 research and innovation programme (CSA - Coordination and support action, 951737H2020-WIDESPREAD-2020-5) under Grant Agreement No. 951737 (NONGAUSS) as well as support from MEYS Czech Republic and the European Union’s Horizon 2020 (2014–2020) research and innovation framework programme under Grant No. 731473 (Project No. 8C20002 ShoQC). C.G. acknowledges support from the Fonds de la Recherche Scientifique – FNRS. N.J.C. acknowledges support from the Fonds de la Recherche Scientifique – FNRS and the European Union under project ShoQC within the Horizon 2020 ERA-NET Cofund in Quantum Technologies (QuantERA) program.

\bibliography{main}

\end{document}